\theoremstyle{definition}
\newtheorem{definition}{Definition}
\newtheorem{proposition}{Proposition}
\newcommand{\indep}{\rotatebox[origin=c]{90}{$\models$}}
\begin{document}

\sectionfont{\bfseries\large\sffamily}%

\subsectionfont{\bfseries\sffamily\normalsize}%

\noindent
{\sffamily\bfseries\LARGE
  Selecting and Ranking Individualized Treatment Rules with Unmeasured
  Confounding}%

\noindent
\textsf{Bo Zhang, Jordan Weiss, Dylan S.\ Small, Qingyuan Zhao*}

\noindent
\textsf{University of Pennsylvania and University of Cambridge}

\noindent
\textsf{}%

\noindent
\textsf{{\bf Abstract}: It is common to compare individualized treatment rules based on the value function, which is the expected potential outcome under the treatment rule. Although the value function is not point-identified when there is unmeasured confounding, it still defines a partial order among the treatment rules under Rosenbaum's sensitivity analysis model. We first consider how to compare two treatment rules with unmeasured confounding in the single-decision setting and then use this pairwise test to rank multiple treatment rules. We consider how to, among many treatment rules, select the best rules and select the rules that are better than a control rule. The proposed methods are illustrated using two real examples, one about the benefit of malaria prevention programs to different age groups and another about the effect of late retirement on senior health in different gender and occupation groups.}%

\noindent
\textsf{{\bf Keywords}: Multiple testing; Observational studies;
  Partial order; Policy discovery; Sensitivity analysis.}

\vfill \noindent
\textsf{*Correspondence to: Qingyuan Zhao, Statistical Laboratory,
  University of Cambridge, Wilberforce Road, Cambridge, CB3 0WB,
  UK. Email: qyzhao@statslab.cam.ac.uk.}

\newpage

\section{Introduction}

A central statistical problem in precision medicine and health policy
is to learn treatment rules that are tailored to the
patient's characteristics. There is now an exploding literature on
individualized policy discovery; see \citet{Precision_medicine} for an
up-to-date review. Although randomized experiments remain the gold
standard for causal inference, there has been a growing interest in using
observational data to draw causal conclusions and discover
individualized treatment rules due to the increasing availability of
electronic health records and other observational data sources
\citep{moodie2012q,athey2017efficient,kallus2017recursive,M_learning,zhao2019efficient}.

A common way to formulate the problem of individualized policy
discovery is via the \emph{value
  function}, which is the expected potential outcome under a treatment rule
or regime. The optimal treatment rule is usually defined as the one
that maximizes the value function. In the single-decision setting, the
value function can be easily
identified when the data come from a randomized experiment (as long as
the probability of receiving treatment is never $0$ or $1$). When the
data come from an observational study, the value function can still be identified under the assumption that all confounders are measured. This assumption
can be further extended to the multiple-decision setting
\citep{murphy2003optimal,robins2004optimal}. In this paper we
will focus our discussion on the single-decision setting but consider
the possibility of unmeasured confounding.

With few exceptions, the vast majority
  of existing methods for treatment rule discovery from observational
  data are based on the no unmeasured confounding
  assumption. Typically, these methods first estimate the value
  function assuming no unmeasured confounding
  and then select the treatment rule that maximizes the
  estimated value function. However, it is common that a substantial
fraction of the
population appear to behave similarly under treatment or control. From
a statistical perspective and if there is truly no unmeasured
confounder, we should still attempt to estimate the treatment effect
for individuals in this subpopulation and optimize the treatment rule
accordingly. However, the optimal treatment decisions for
these individuals are, intuitively, also the most sensitive to unmeasured
confounding. It may only take a small amount of unmeasured confounding
to change the sign of the estimated treatment effects for these
individuals. From a policy perspective (especially when there is a
cost constraint), learning the ``optimal'' treatment decision for
these individuals from observational data seems likely
  to be error-prone.

\subsection{Sensitivity analysis for individualized treatment rules}
\label{sec:sens-analys-indiv}

There is a long literature on studying the sensitivity of
observational studies to unmeasured confounding, dating from
\citet{Cornfield1959}. In short, such sensitivity analysis asks
how much unmeasured confounding is needed to alter the causal
conclusion of an observational study qualitatively. In this paper, we
will study the sensitivity of individualized treatment rules to
unmeasured confounding using a prominent model proposed by
\citet{Rosenbaum1987}, where the odds ratio of receiving the treatment
for any two individuals with the same observed covariates is bounded
between $1/\Gamma$ and $\Gamma$ ($\Gamma \ge 1$; $\Gamma = 1$
corresponds to no unmeasured confounding). More specifically,
we will consider selecting and ranking individualized treatment rules
under Rosenbaum's model for unmeasured confounding.

Our investigation is motivated by the impact of effect
modification on the power of Rosenbaum's sensitivity analysis that is studied by
\citet{Hsu2013effectmodification}. A phenomenon found by
\citet{Hsu2013effectmodification} is that subgroups with larger
treatment effect may have larger design sensitivity. For example,
suppose a subgroup A has larger treatment effect than a subgroup B
based on observational data. Then, there may exist a $\Gamma > 1$ such
that, when the sample size of both subgroups go to infinity, the
probability of rejecting Fisher's sharp null hypothesis under the
$\Gamma$-sensitivity model goes to $1$ for subgroup A and $0$ for
subgroup B. Therefore, to obtain causal conclusions that are most
robust to unmeasured confounding, it may be more desirable to use a smaller
subgroup with larger treatment effect than to use a larger subgroup
with smaller treatment effect.

When comparing individualized treatment rules to a baseline, the above
phenomenon suggests that a treatment rule with smaller value may be
less sensitive to unmeasured confounding than a treatment rule with
larger value. In other words, when there is unmeasured confounding,
the ``optimal'' treatment rule might not be the one that maximizes the
value function assuming no measured confounding; in fact, there are usually
many ``optimal'' treatment rules. This is because the value function
in this case only defines a \emph{partial order} on the set of
individualized treatment rules, so two rules with different value
function assuming no unmeasured confounding may become
indistinguishable under the $\Gamma$-sensitivity model when $\Gamma >
1$. Fundamentally, the reason is that the value function is only
partially identified in Rosenbaum's $\Gamma$-sensitivity model.

As an example, let's use $r_2 \succ_{\Gamma} r_1$ (abbreviated as
$r_2 \succ r_1$ if the value of $\Gamma$ is clear from the context) to denote that
the value of rule $r_2$ is \emph{always} greater than the value of
$r_1$ when the unmeasured confounding satisfies the
$\Gamma$-sensitivity model. Then, it is possible that
\begin{itemize}
\item Under $\Gamma = 1$, $r_2 \succ r_1 \succ r_0$ (so $r_2 \succ r_0$);
\item Under some $\Gamma > 1$, $r_1 \succ r_0$ but $r_2 \not \succ r_0$.
\end{itemize}
This phenomenon occurs frequently in real data examples, see
\Cref{fig: hasse malaria hybrid} in \Cref{subsec: ranking}. Note that
the relation $\succ$ is
defined using the value function computed using the population instead
of a particular sample.

Because the value function only defines a partial order on the
treatment rules, it is no longer well-defined to estimate \emph{the}
optimal treatment rule when there is unmeasured confounding. Instead,
we aim to recover the partial ordering of a set of treatment rules or
select a subset of rules that satisfy certain statistical
properties. This problem is related to the problem of selecting and ranking
subpopulations (as a post hoc analysis for randomized experiments)
which has been extensively studied in statistics
\citep{gupta1979multiple,gibbons1999selecting}. Unfortunately, in problems
considered by the existing literature, the subpopulations always have
a \emph{total order}. For example, a prototypical problem in that
literature is to select a subset that contains the largest $\mu_i$
based on independent observations $Y_i \sim \mathrm{N}(\mu_i,1)$. It is
evident that the methods developed there cannot be directly applied to the
problem of comparing treatment rules which only bears a partial
order. Nevertheless, we will borrow some definitions in
that literature to define the goal of selecting and ranking
individualized treatment rules.

  \subsection{Related work and our approach}
  Existing methods for individualized policy discovery from observational data
  often take an \emph{exploratory} stance. They often aim to select the
  individualized treatment rule, often within an infinite-dimension
  function class, that maximized the estimated value function using
  outcome regression-based \citep{robins2004optimal, qian_murphy2011},
  inverse-probability weighting \citep{Zhao_OWL, kallus2017recursive},
  or doubly-robust estimation\citep{dudik2014doubly,
    athey2017efficient}. In order to estimate the value function, some
  parametric or semiparametric models are specified to model the outcome
  and/or the treatment selection process. To identify the value function,
  the vast majority of these approaches make the no unmeasured confounding
  assumption which may be unrealistic in many applications. The only
  exception to our knowledge is \citet{kallus2018confounding}, in which
  the authors propose to maximize the minimum
  value of an individualized treatment rule when the unmeasured
  confounding satisfies a marginal sensitivity model
  \citep{tan2006distributional,Zhao2017}. This is further extended to
  the estimation of conditional average treatment effect with unmeasured
  confounding in \citet{kallus2018interval}. Another related work is
  \citet{yadlowsky2018bounds} who consider semiparametric inference
  for the average treatment effect in Rosenbaum's sensitivity model.

  In this paper we take a different perspective. Our
  approach is based on a statistical test to compare the
  value of two individualized treatment rules when there is limited
  unmeasured confounding. Briefly speaking, we first match the treated and
  control observations by the observed covariates and then propose to
  use Rosenbaum's sensitivity model to quantify the magnitude of
  unmeasured confounding after matching (the deviation of the
  matched observational study from a pairwise randomized
  experiment). At the core of our proposal is a randomization
  test introduced by \cite{fogarty2016studentized} to compare the value
  of two individualized treatment rules in Rosenbaum' sensitivity
  model. Based on this test, we
  introduce a framework to rank and select treatment rules within a
  given finite collection and show that different statistical errors
  can be controlled with the appropriate multiple hypothesis testing
  methods.

  In principle,  our framework can be used with an arbitrary (finite)
  number of pre-specified treatment rules. In practice, it is more
  suitable for small-scale policy discovery with relatively few
  decision variables, where it is not needed to use machine
  learning methods to discover complex patterns or such methods have
  already been employed in a preliminary study to suggest a few
  candidat rules. The design-based nature of our approach makes it
  particularly useful
  for \emph{confirmatory} analyses, the importance of which is widely
  acknowledged in the policy discovery literature
  \citep[e.g.][]{kallus2017recursive,zhang2018interpretable,Precision_medicine}. Methods
  proposed in this paper thus complement the existing literature on
  individualized treatment rules by
  providing a way to confirm the effectiveness of a treatment rule
  learned from observational data and assess its robustness to
  unmeasured confounding. When there are several competing treatment
  rules, our framework further facilitates the decision maker
  to select or rank the treatment rules using flexible criteria.

The rest of the paper is organized as follows. In \Cref{sec:2} we
introduce a real data example that will be used to illustrate the
proposed methods. We then introduce some notations and discuss how to
compare two treatment rules when there is unmeasured confounding. In
\Cref{sec:multiple} we consider three questions about ranking and
selecting among multiple treatment rules. We compare our proposal with
some baseline procedures in a simulation study
\Cref{sec:simulations} and apply our method to another application
using data from the Health and Retirement Study. Finally, we conclude
our paper with some brief discussion in \Cref{sec:discussion}.

\section{Comparing treatment rules with unmeasured confounding}
\label{sec:2}

\subsection{Running example: Malaria in West Africa}
\label{sec:running-example}

The Garki Project, conducted by the World Health Organization and the
Government of Nigeria from 1969-1976, was an observational study that
compared several strategies to control
malaria. \citet{Hsu2013effectmodification} studied the effect
modification for one of the malaria control
strategies, namely spraying with an insecticide, propoxur, together
with mass administration of a drug sulfalene-pyrimethamine at high
frequency. The outcome is the difference between the frequency
of Plasmodium falciparum in blood samples, that is, the frequency of a
protozoan parasite that causes malaria, measured before and after the
treatment. Using 1560 pairs of treated and control individuals matched
by their age and gender, \citet{Hsu2013effectmodification} found that
the treatment was much more beneficial for young children
than for other individuals, if there is not unmeasured
confounding.

More interestingly, they found that, despite the reduced
sample size, the 447 pairs of young children exhibit a treatment
effect that is far less sensitive to unmeasured confounding bias than
the full sample of 1560 pairs. So from a policy perspective, it may be
preferable to implement the treatment only for young children rather
than the whole population. In the rest of this paper we will
generalize this idea to selecting and ranking treatment rules. We will
use the matched dataset in \citet{Hsu2013effectmodification} to
illustrate the definitions and methodologies in the paper; see the
original article for more information about the Garki Project dataset
and the matched design. A different application concerning the effect
of late retirement on health outcomes will be presented in
\Cref{sec:application} near the end of this article.

\subsection{Some notations and definitions}
\label{sec:some-notat-defin}

We first introduce some notations in order to compare treatment rules
when there is unmeasured confounding. Let ${X}$
denote all the pre-treatment covariates measured by the
investigator. In the single-decision setting considered in this paper,
an individualized treatment rule (or treatment regime) $d$ maps a
vector of pre-treatment covariates $ X$ to the binary treatment decisions,
$\{0, 1\}$ ($0$ indicates control and $1$ indicates treatment). In our
running example, we shall consider six treatment rules,
$r_0,r_1,\cdots,r_5$, where $r_i$ assigns treatment to the youngest $i
\times 20\%$ of the individuals. Specifically, the minimum, $20\%,
40\%, 60\%, 80\%$ quantiles, and maximum of age are $0 \,
(\text{newborn})$, $7$, $20$, $31$, $41$, and $73$ years old.

Let
$Y$ be the outcome and $Y(0),Y(1)$ be the potential outcomes under
control and treatment. The potential outcome under a treatment rule
$d$ is defined, naturally, as $Y(d) = Y(0) 1_{\{d(X)=0\}} + Y(1)
1_{\{d(X)=1\}}$. A common way to compare treatment rule is to use its
value function, defined as the expected potential outcome under that
rule, $V(d) = \mathbb{E}[Y(d)]$. The \emph{value difference} of two treatment
rules, $r_1$ and $r_2$, is thus
\begin{equation} \label{eq:value-diff}
  \begin{split}
    V(r_2) - V(r_1) &= \mathbb{E}[Y(r_2) - Y(r_1) \,|\, r_2 \neq
    r_1] \cdot \mathbb{P}(r_2 \neq r_1) \\
    &= \mathbb{E}[Y(1) - Y(0) \,|\, r_2 > r_1] \cdot \mathbb{P}(r_2 > r_1)
    - \mathbb{E}[Y(1) - Y(0) \,|\, r_2 < r_1] \cdot \mathbb{P}(r_2 < r_1),
  \end{split}
\end{equation}
where for simplicity the event $r_1( X) \ne r_2( X)$ is
abbreviated as $r_1 \ne r_2$ (similarly for $r_1 < r_2$ and $r_1 >
r_2$). Note that the event $r_2 > r_1$ is the same as $r_2 = 1,r_1 =
0$ because the treatment decision is binary. One of the terms on the right
hand side of \eqref{eq:value-diff} will become $0$ if the treatment
rules are nested. In the malaria example, $r_0 \le r_1 \le \cdots \le
r_5$, so the value difference of the rules $r_1$ and $r_2$  can be written as
\[
  V(r_2) - V(r_1) = \mathbb{E}[Y(1)-Y(0)\, |\, \text{Age} \in [7, 20)] \cdot
  \mathbb{P}(\text{Age} \in [7, 20)).
\]
In this case, testing the sign of $V(r_2) - V(r_1)$ is equivalent to
testing the sign of the conditional average treatment effect,
$\mathbb{E}[Y(1)- Y(0)\,|\,r_2 > r_1]$.

The definition of the value function depends on the potential
outcomes. To identify the value function using observational data, it
is standard to make the following assumptions \citep{Precision_medicine}:
\begin{enumerate}
\item Positivity: $\mathbb{P}(A=a \,|\, X =  x) > 0$ for all $a$
  and $ x$;
\item Consistency (SUTVA): $Y = Y(A)$;
\item Ignorability (no unmeasured confounding): $Y(a)\, \indep \, A \, | \,  X$
  for all $a$.
\end{enumerate}
Under these conditions, it is straightforward to show that the value
function is identified by \citep{qian_murphy2011}
\[
  V(d) = \mathbb{E} \bigg[ \frac{Y I(A=d(X))}{\pi(A,X)} \bigg],
\]
where $I$ is the indicator function of an event and $\pi(a,x) =
\mathbb{P}(A=a|X=x)$ is the propensity score.

% Consider the value difference in two concrete real data examples.
% \begin{example}[Moving to Opportunity (MTO)]
%   \cite{Chetty2016} found that moving to a lower-poverty neighborhood significantly improves college attendance rates and earnings for children who were young (below age 13) when their families moved, while the same moves have, negative long-term impacts on children who are more than 13 years old when their families moved, using empirical evidence from the Moving to Opportunity (MTO) experiment. This finding suggests one promising treatment regime is encouraging those whose children were below 13 to move, i.e., $r_1 = \mathbbm{1}\{\mbox{Age} < 13\}$. The value difference between such a treatment regime and another treatment regime $r_0$ that assigns everyone to treatment, according to our definition above, is \[
%     \delta(r_0, r_1) = \mathbb{E}[Y(1) - Y(0) ~|~ \mbox{Age} \geq 13]\cdot P(\mbox{Age} \geq 13).
%   \]
% \end{example}

The value function gives a natural and total order to the
treatment rules. If the above identification assumptions hold, the
value functions can be identified and thus this order can be
consistently estimated as the sample size increases to infinity. In
general, it is impossible to recover this order when there is
unmeasured confounding. However, if the magnitude of unmeasured
confounding is bounded according to a sensitivity model (a collection
of distributions of the observed variables and unobserved potential
outcomes), it is possible to partially identify difference between the
value of two treatment rules and thus obtain a partial order.

% \begin{definition}[Ordering of two treatment regimes]
%   Let $r_1$ and $r_2$ be two treatment regimes and $\delta(r_1, r_2)$ the value difference as defined above. We define a binary relation ``$\prec$'' between $r_1$ and $r_2$ as follows:
%   \[
%     r_1 \prec r_2 \qquad \iff \qquad \delta(r_1, r_2) = \mathbb{E}[Y(r_1) - Y(r_2)] < 0.
%   \]
% \end{definition}

% The value of a treatment regime is a population-level quantity. When this population-level quantity is point identified (as the sample size n goes to infinity), a set of treatment regimes $\mathcal{D}$ plus the binary relation ``$\prec$'' defined above, yields a \emph{strict total order}. The situation becomes more subtle when a key assumption in drawing causal inference, the so-called \emph{treatment ignorability}, sometimes also referred to as the \emph{no unmeasured confounding assumption} (NUCA) (\cite{Robins1992}), \emph{exchangeability} (\citep{Greenland1986}), \emph{selection on observables} (\citep{Barnow1980}), or \emph{treatment exogeneity} (\citep{Imbens2004}), is violated and the population-level treatment effect becomes only partially identified. We have the following important definition in this case.

\begin{definition}
  \label{def: prec_gamma}
  Let $r_1$ and $r_2$ be two treatment rules that map a vector of
  pre-treatment covariates $X$ to a binary treatment decision $\{0,
  1\}$, and $V(r_1)$, $V(r_2)$ their corresponding value
  functions. Given a sensitivity analysis model indexed by $\Gamma$, we say
  that the rule $r_1$ is dominated by $r_2$ with a margin $\delta$ if
  $V(r_2) - V(r_1) > \delta$ for all distributions in the sensitivity
  analysis model. We denote this relation as $r_1 \prec_{\Gamma,\delta} r_2$
  and furthere abbreviate it as $r_1 \prec_\Gamma r_2$ if $\delta =
  0$. We denote $r_1 \not \prec_{\Gamma} r_2$ if $r_1$ is not
  dominated by $r_2$ with margin $\delta = 0$.
\end{definition}

Notice that the partial order should be defined in terms of the
partially identified interval for $V(r_2) - V(r_1)$ instead of the
partially identified intervals for $V(r_1)$ and $V(r_2)$. This is
because the same distribution of the unobserved potential outcomes
needs to be used when computing the partially identified interval for
$V(r_2) - V(r_1)$, so it is not simply the difference between the
partially identified intervals for the individual values (the easiest
way to see this is to take $r_1 = r_2$). We thank an anonymous
reviewer for pointing this out.

% \begin{definition}
%   Consider a sensitivity analysis model indexed by $\Gamma$ and let
%   $I_1$ and $I_2$ be the partially identified intervals for the
%   value function $V(r_1)$ and $V(r_2)$. We say that, under the
%   $\Gamma$-sensitivity model, the rule $r_1$ is dominated by $r_2$ with
%   the margin $\delta \ge 0$ if $I_1 < I_2 - \delta$ in the
%   sense that $v_1 < v_2 - \delta$ for all $v_1 \in I_1$ and $v_2 \in
%   I_2$. We denote this relationship by $r_1 \prec_{\Gamma,\delta}
%   r_2$. This is abbreviated as $r_1 \prec_\Gamma r_2$ if $\delta =
%   0$. We denote $r_1 \not \prec_{\Gamma} r_2$ if $r_1$ is not dominated
%   by $r_2$ in the above sense.
%   \label{def: prec_gamma}
% \end{definition}

It is easy to see that $\prec_{\Gamma}$ is a strict partial order on
the set of treatment rules because it satisfies irreflexivity (not $r_1
\prec_{\Gamma} r_1$),
transitivity ($r_1 \prec_{\Gamma} r_2$ and $r_2 \prec_{\Gamma} r_3$
imply $r_1 \prec_{\Gamma} r_3$), and asymmetry ($r_1 \prec_{\Gamma}
r_2$ implies not $r_2 \prec_{\Gamma} r_1$). In Rosenbaum's sensitivity
model be introduced in the section below, $\Gamma = 1$ corresponds
to no unmeasured confounding and thus the relationship
$\prec_{\Gamma=1}$ is a total order.

% We illustrate this concept using a numerical example. Suppose there are $3$
% cohorts of equal size such that $Y(1) - Y(0) = 2$ in the first cohort,
% $Y(1) - Y(0) = 1$ in the second and third cohorts. Suppose $r_1$
% assigns treatment to cohort $1$, $r_2$ to $1$ and $2$, and $r_3$ to
% $1$, $2$, and $3$. When there is no unmeasured confounding, i.e.,
% $\Gamma = 1$, we have $V(r_1) = 2/3$, $V(r_2) = 1$, and $V(r_3) =
% 4/3$. Trivially, we have $r_1 \prec_1 r_2 \prec_1 r_3$. $V(r_i), i =
% 1,2,3$ becomes only partially identified as $\Gamma > 1$ and a
% completely different ordering among regimes arises. From Figure
% \ref{fig: limit int intro}, we see when $\Gamma$ increases slightly
% from $1.0$ to $1.1$, we still have $r_1 \prec_1 r_2 \prec_1 r_3$;
% however, when $\Gamma$ increase to $2$, $r_2 \not\prec r_3$ and all we
% can say is that $r_1 \prec_2 r_2$ and $r_1 \prec_2 r_3$. When $\Gamma$
% increases to $2.5$, even $r_1 \prec r_2$ does \emph{not} hold and we
% only have $r_1 \prec_{2.5} r_3$.

% \begin{figure}
%   \centering
%   \includegraphics[scale = 0.5]{limit_interval_intro.png}
%   \caption{Limiting intervals of the value function for different choices of $\Gamma$}
%   \label{fig: limit int intro}
% \end{figure}

\subsection{Testing $r_1 \not \prec_\Gamma r_2$ using matched observational
  studies}
\label{sec:test-r_1-prec_g}

With the goal of selecting and ranking
treatment rules with unmeasured confounding in mind, in this section we
consider the easier but essential task of comparing the value of two treatment
rules, $r_1$ and $r_2$, under Rosenbaum's sensitivity model. This test
will then serve as the basic element of our procedures of selecting and
ranking among multiple treatment rules below. We will first
introduce the pair-matched design of an observational study and Rosenbaum's
sensitivity model, and then describe a studentized sensitivity
analysis proposed by \citet{fogarty2016studentized} that tests
Neyman's null hypothesis of average treatment effect being zero under
Rosenbaum's sensitivity model. This test can be immediately extended
to compare the value of treatment rules.

Suppose the observed data are $n$ pairs, $i = 1,2,...,n$, of two
subjects $j = 1,2$. These $n$ pairs are matched for observed
covariates $\bm X$ and within each pair, one subject is treated,
denoted $A_{ij} = 1$, and the other control, denoted $A_{ij} = 0$, so
that we have $\bm X_{i1} = \bm X_{i2}$ and $A_{i1} + A_{i2} = 1$ for all
$i$. In a sensitivity analysis, we may fail to match on
an unobserved confounder $U_{ij}$ and thus incur unmeasured
confounding bias.

\cite{Rosenbaum1987,Rosenbaum2002a} proposed a
one-parameter sensitivity model. Let $\mathcal{F} =
\{(Y_{ij}(0),Y_{ij}(1), \allowbreak \bm X_{ij}, U_{ij}), i = 1,\dotsc,n,
j = 1,2\}$ be the collection of all measured or unmeasured variables
other than the treatment assignment. Rosenbaum's sensitivity model
assumes that $\pi_i = P(A_{i1} = 1 | \mathcal{F})$ satisfies
\begin{equation}
  \frac{1}{1+\Gamma} \leq \pi_i \leq
  \frac{\Gamma}{1+\Gamma},~i =
  1,2,...,n.
  \label{eqn: rosenbaum model}
\end{equation}
When $\Gamma = 1$, this model asserts that $\pi_i = 1/2$ for all $i$
and thus every subject has equal probability to be assigned to
treatment or control (i.e.\ no unmeasured confounding). In general,
$\Gamma > 1$ controls the degree of departure from
randomization. \cite{Rosenbaum2002a,rosenbaum2011new} derived
randomization inference
based on signed score tests for Fisher's sharp null hypothesis that
$Y_{ij}(0) = Y_{ij}(1)$ for all $i,j$. The asymptotic properties of
these randomization tests are studied in
\citet{rosenbaum2004design,Rosenbaum2015} and
\citet{Zhao2018sens_value}.

In the context of comparing individualized treatment rules, Fisher's
sharp null hypothesis is no longer suitable because we expect to have
(and indeed are tasked to find) heterogeneous treatment effect.  Recently,
\cite{fogarty2016studentized}
developed a valid studentized test for Neyman's null hypothesis that
the average treatment effect is equal to zero, $(2n)^{-1}\sum_{ij}
Y_{ij}(1) - Y_{ij}(0) = 0$, under Rosenbaum's sensitivity model. We
briefly describe Fogarty's test. Let $D_i$ denote the
treated-minus-control difference in the $i^{th}$ matched pair,
$D_i = (A_{i1} - A_{i2}) (Y_{i1} - Y_{i2})$. Fix the
sensitivity parameter $\Gamma$ and define
\[
  D_{i, \Gamma} = D_i - \left(\frac{\Gamma - 1}{\Gamma +1}
  \right)|D_i|,~\overline{D}_{\Gamma} = \frac{1}{n}
  \sum_{i=1}^n D_{i,\Gamma},~\text{and}~
  \text{se}(\overline{D}_\Gamma)^2 = \frac{1}{n(n - 1)} \sum_{i = 1}^n (D_{i,
    \Gamma} - \overline{D}_{\Gamma})^2.
\]
\citet{fogarty2016studentized} showed that the one-sided student-$t$
test that rejects Neyman's hypothesis when
\[
  \frac{\overline{D}_\Gamma}{\text{se}(\overline{D}_\Gamma)}
  > \Phi^{-1}(1 - \alpha)
\]
is asymptotically valid with level $\alpha$ under Rosenbaum's
sensitivity model \eqref{eqn: rosenbaum
  model} and mild regularity
conditions. This test can be easily extended to test the null that the
average treatment effect is no greater than $\delta$ by replacing
$D_i$ with $D_i - \delta$.
\citet{fogarty2016studentized} also provided a
randomization-based reference distribution in addition to the
large-sample normal approximation.

% Let $V_{i, \Gamma}$ be a translated Bernoulli random variable such that $P(V_{i, \Gamma} = +1) = \Gamma/(1 + \Gamma)$ and $P(V_{i, \Gamma} = -1) = 1/(1 + \Gamma)$. Define \[B_{i, \Gamma} = V_{i, \Gamma}|D_i| - \left(\frac{\Gamma - 1}{\Gamma +1} \right)|D_i|
% \]
% and consider the following biased randomization distribution
% \begin{equation}
%   \begin{split}
%     \hat{G}_\Gamma(x) &= P\left(\frac{\overline{B}_\Gamma}{se(\overline{B}_\Gamma)} \leq x ~|~ \mathcal{A}\right) \\
%     &= \sum_{\boldsymbol{a} \in \mathcal{A}} \mathbbm{1}\left\{\frac{\overline{B}_\Gamma(\boldsymbol{a}_1 - \boldsymbol{a}_2, D)}{se(\overline{B}_\Gamma(\boldsymbol{a}_1 - \boldsymbol{a}_2, D))}\leq x\right\} \pror_{i = 1}^I \left(\frac{\Gamma}{\Gamma + 1}\right)^{t_{i1}}\cdot \left(\frac{1}{\Gamma + 1}\right)^{1 - t_{i1}}.
%   \end{split}
% \end{equation}
% Another level $\alpha$ sensitivity analysis based on the reference distribution $\hat{G}_\Gamma(x)$ replaces the critical value from a normal approximation with
% one from the randomization distribution, namely $\hat{G}_\Gamma^{-1}(1 - \alpha)$.

The above test for the average treatment effect can be readily
extended to comparing treatment rules. Recall that equation
\eqref{eq:value-diff} implies the value difference of two rules $r_1$
and $r_2$ is a weighted difference of two conditional average
treatment effects on the set $r_1 > r_2$ and $r_2 > r_1$. When the two
rules are nested (without loss of generality assume $r_2 \ge r_1$),
testing the null hypothesis that $r_1 \not \prec_{\Gamma} r_2$
is equivalent to testing a Neyman-type hypothesis $\mathbb{E}[Y(1) - Y(0) |
r_2 > r_1] \le 0$ under the $\Gamma$-sensitivity model. We can simply
apply Fogarty's test to the matched pairs (indexed by $i$) that satisfy
$r_2(\bm X_{i1}) > r_1(\bm X_{i1})$. When the two rules are not nested, we can
flip the sign of $D_i$ for those $i$ such that $r_2(\bm X_{i1}) <
r_1(\bm X_{i1})$ and then apply Fogarty's test. In summary, to test the
null hypothesis that $r_1 \not \prec_{\Gamma} r_2$, we can
simply apply Fogarty's test to $\{D_i \cdot [r_2(\bm X_{i1}) - r_1(\bm X_{i1})],~\text{for}~i~\text{such
  that}~r_1(\bm X_{i1}) \ne r_2(\bm X_{i1})\}$. To test the hypothesis $r_1 \not
\prec_{\Gamma,\delta} r_2$, we can use Fogarty's test for the average
treatment effect no greater than $\delta \cdot (n / m)$ where $m =
\big|\{i:\,r_1(\bm X_{i1}) \ne r_2(\bm X_{i2})\}\big|$.

\subsection{Sensitivity value of treatment rule comparison}
A hallmark of Rosenbaum's sensitivity analysis framework is its tipping-point
analysis, and that extends to the comparison of treatment rules. When
testing $r_1 \not \prec_\Gamma r_2$ with a series of
$\Gamma$, there exists a smallest $\Gamma$ such that the null
hypothesis cannot be rejected, that is, we are no longer confidence
that $r_1$ is dominated by $r_2$ in that $\Gamma$-sensitivity
model. This tipping point is commonly referred to as the
\emph{sensitivity value} \citep{Zhao2018sens_value}. Formally, we
define the sensitivity value for $r_1 \prec r_2$ as
\begin{equation*}
  \begin{split}
    \Gamma_{\alpha}^\ast(r_1 \prec r_2) = \inf\{\Gamma \geq 1~:
    &~\text{The hypothesis}~V(r_1)
    \ge V(r_2) \mbox{ cannot be rejected} \\
    &\mbox{ at level $\alpha$ under the
      $\Gamma$-sensitivity model} \}.
  \end{split}
\end{equation*}
Let $r_0$ be the null treatment rule (for example, assigning control
to the entire population). The sensitivity $\Gamma_{\alpha}^\ast(r_0 \prec r_1)$ is further abbreviated as $\Gamma_{\alpha}^\ast(r_1)$.

\citet{Zhao2018sens_value} studied the asymptotic properties of the
sensitivity value when testing Fisher's sharp null hypothesis using a
class of signed core statistics. Below, we will give the asymptotic
distribution of $\Gamma_{\alpha}^\ast(r_1 \prec r_2)$ using Fogarty's
test as described in the last section. The result will be stated in
terms of a transformation of the sensitivity value,
\[
  \kappa^\ast_{\alpha}(r_1 \prec r_2) = \frac{\Gamma_{\alpha}^\ast(r_1 \prec
    r_2) - 1}{\Gamma_{\alpha}^\ast(r_1 \prec r_2) + 1}.
\]
Note that $\Gamma^\ast = 1$ is transformed to $\kappa^\ast = 0$ and
$0 \le \kappa^{\ast} < 1$.

\begin{proposition}
  Assume the treatment rules are nested, $r_1(x) \le r_2(x)$, and let $\mathcal{I}$ be the set of indices $i$ where $r_1(\bm X_{i1}) <
  r_2(\bm X_{i2})$. Assuming the moments of $|D_i|$ exist and
  $\mathbb{E}[D_i \mid r_1 < r_2] > 0$, then
  \begin{equation} \label{eq:sen-value-asymp}
    \begin{split}
      \sqrt{|\mathcal{I}|}\left(\kappa_{\alpha}^\ast(r_1 \prec r_2) -
        \frac{\mathbb{E}[D_i \mid r_1 < r_2]}{\mathbb{E}[|D_i| \mid
          r_1 < r_2]}\right) \overset{d}{\to}
      \text{N}\left(z_{\alpha} \mu,
        ~\sigma^2\right),~\text{as}~ |\mathcal{I}| \to \infty, \\
    \end{split}
  \end{equation}
  where $z_\alpha$ is the upper-$\alpha$ quantile of the standard normal
  distribution and the parameters $\mu$ and $\sigma^2$ depend on the
  distribution of $D_i$ (the expressions can be found in the Appendix).
  \label{thm: asymp kappa}
\end{proposition}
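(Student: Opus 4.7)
My plan is to exploit the implicit definition of the sensitivity value through Fogarty's test and then apply a one-dimensional delta method. Write $\kappa = (\Gamma-1)/(\Gamma+1)$ and restrict attention to the $m = |\mathcal{I}|$ matched pairs where $r_1 < r_2$; since the rules are nested, the signed-difference adjustment of Section~2.3 reduces to $D_i$ itself on $\mathcal{I}$. Fogarty's studentized statistic therefore takes the form
\[
  T_m(\kappa) = \frac{\bar{D}_m - \kappa\,\overline{|D|}_m}{s_m(\kappa)}, \qquad s_m(\kappa)^2 = \frac{1}{m(m-1)}\sum_{i \in \mathcal{I}}\bigl\{(D_i - \bar{D}_m) - \kappa(|D_i| - \overline{|D|}_m)\bigr\}^2,
\]
which is continuous and, in the relevant range of $\kappa$, asymptotically strictly decreasing. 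Consequently the infimum defining $\kappa^{\ast}_{\alpha}(r_1 \prec r_2)$ coincides (with probability tending to one) with the unique root of $g_m(\kappa) := \bar{D}_m - \kappa\,\overline{|D|}_m - z_\alpha\, s_m(\kappa) = 0$.

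Under the matched-pair super-population model the variables $D_i$, $i \in \mathcal{I}$, are iid conditional on $r_1 < r_2$, so the weak law of large numbers yields $\bar{D}_m \to \mu_D := \mathbb{E}[D_i \mid r_1 < r_2]$ and $\overline{|D|}_m \to \mu_{|D|} := \mathbb{E}[|D_i| \mid r_1 < r_2]$. Expanding $s_m(\kappa)^2$ as a quadratic in $\kappa$ with coefficients that are sample central moments and invoking the moment assumption gives $m\,s_m(\kappa)^2 \to \tau^2(\kappa) := \mathrm{Var}(D - \kappa|D| \mid r_1 < r_2)$ uniformly on compacta, and the positivity assumption $\mu_D > 0$ makes $\kappa_0 := \mu_D / \mu_{|D|}$ the unique zero of the deterministic limit $\kappa \mapsto \mu_D - \kappa \mu_{|D|}$, giving consistency $\kappa^\ast \to \kappa_0$. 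A mean-value expansion of $g_m$ at $\kappa_0$ combined with $g_m'(\kappa) = -\overline{|D|}_m - z_\alpha s_m'(\kappa)$ and $s_m'(\kappa) = O_p(m^{-1/2})$ yields $g_m'(\tilde\kappa) = -\mu_{|D|} + o_p(1)$ for any $\tilde\kappa$ between $\kappa^\ast$ and $\kappa_0$, and therefore
\[
  \sqrt{m}\,(\kappa^\ast - \kappa_0) = \frac{\sqrt{m}\,g_m(\kappa_0)}{\mu_{|D|}} + o_p(1).
\]

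The central limit theorem applied to the mean-zero iid summands $D_i - \kappa_0 |D_i|$ gives $\sqrt{m}\,(\bar{D}_m - \kappa_0\,\overline{|D|}_m) \overset{d}{\to} \mathrm{N}(0, \tau^2(\kappa_0))$, while $\sqrt{m}\,s_m(\kappa_0) \to \tau(\kappa_0)$ in probability, so $\sqrt{m}\,g_m(\kappa_0) \overset{d}{\to} \mathrm{N}(-z_\alpha \tau(\kappa_0), \tau^2(\kappa_0))$; dividing by $\mu_{|D|}$ recovers the claimed limit with $\mu = -\tau(\kappa_0)/\mu_{|D|}$ and $\sigma^2 = \tau^2(\kappa_0)/\mu_{|D|}^2$. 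The main obstacle is the bookkeeping in the implicit-function step rather than the LLN or the CLT themselves: I need uniform control of $g_m'$ on a shrinking neighborhood of $\kappa_0$ and a monotonicity or intermediate-value argument to certify that the infimum in the definition of $\kappa^\ast$ really is attained at the root of $g_m$ and not at a boundary or a local kink of $s_m$. The moment assumption on $|D_i|$ is used here to guarantee both the CLT for $D_i - \kappa_0 |D_i|$ and the consistency of $s_m(\kappa)^2$, while the hypothesis $\mu_D > 0$ keeps $\kappa_0$ in the interior of $[0,1)$, where these smoothness arguments apply.
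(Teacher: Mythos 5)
Your argument is correct and reaches the same limit law as the paper: writing $\tau^2(\kappa)=\mathrm{Var}(D-\kappa|D|\mid r_1<r_2)$ and $\kappa_0=\mu_D/\mu_{|D|}$, your $\mu=-\tau(\kappa_0)/\mu_{|D|}$ and $\sigma^2=\tau^2(\kappa_0)/\mu_{|D|}^2$ agree exactly, after expanding $\tau^2(\kappa_0)=\bigl(\sigma_D^2\mu_{|D|}^2-2\mu_D\mu_{|D|}\sigma_{D,|D|}+\mu_D^2\sigma_{|D|}^2\bigr)/\mu_{|D|}^2$, with the expressions in the Appendix. The route is genuinely different in its mechanics, though. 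The paper squares the studentized equation $\overline{D}_\Gamma/\mathrm{se}(\overline{D}_\Gamma)=z_\alpha$ to obtain an explicit quadratic in $\kappa$, selects the relevant root via the quadratic formula, and then Taylor-expands that closed-form root around the population moments, using the bivariate CLT for $(\overline{D},\overline{|D|})$ pushed through the map $(x,y)\mapsto(xy,y^2)$. You instead treat $\kappa^\ast$ as a Z-estimator, the root of $g_m(\kappa)=\overline{D}_m-\kappa\overline{|D|}_m-z_\alpha s_m(\kappa)$, and linearize the estimating equation; this avoids the sign ambiguity of the $\pm\sqrt{\Delta}$ root selection and makes the source of the noncentrality $-z_\alpha\tau(\kappa_0)$ transparent (it is exactly the $-z_\alpha\sqrt{m}\,s_m(\kappa_0)$ term entering by Slutsky). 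The one step you flag as remaining---certifying that the infimum defining $\kappa^\ast$ is the root of $g_m$ rather than a boundary value---does close cleanly: since $m\,s_m(\kappa)^2$ is a quadratic in $\kappa$ with $O_p(1)$ coefficients, $s_m'(\kappa)=O_p(m^{-1/2})$ wherever $\tau^2(\kappa)>0$, so $g_m'(\kappa)=-\mu_{|D|}+o_p(1)<0$ uniformly on compacta and $g_m(0)\to\mu_D>0$, whence with probability tending to one $g_m$ has a unique zero in $(0,1)$ and the infimum is attained there; the paper glosses this point entirely, so your treatment is if anything more careful.
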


The proof of this proposition can be found in the Appendix. When the
treatment rules are not nested, one can simply replace $D_i$ with $D_i
[r_2(\bm X_{i1}) - r_1(\bm X_{i1})]$ and the condition $r_1 < r_2$
with $r_1 \ne r_2$ in the proposition statement. The asymptotic
distribution of $\Gamma^{\ast}_{\alpha}(r_1 \succ r_2)$ can be found by the delta
method and we omit further detail.

The asymptotic distribution in \eqref{eq:sen-value-asymp} is similar to
the one obtained in \citet[Thm.\ 1]{Zhao2018sens_value}. When the
treatment rules are nested $r_1 \le r_2$ and $|\mathcal{I}|
\to \infty$, the sensitivity value converges to a number that depends
on the distribution of $D_i$,
\[
  \Gamma_{\alpha}^{*}(r_1 \prec r_2) \overset{p}{\to}
  \frac{\mathbb{E}[|D_i| \mid r_1 < r_2] + \mathbb{E}[D_i \mid r_1 <
    r_2]}{\mathbb{E}[|D_i| \mid r_1 < r_2] -
    \mathbb{E}[D_i \mid r_1 < r_2]}.
\]
The limit on the right hand side is the \emph{design sensitivity}
\citep{rosenbaum2004design} of Fogarty's test for comparing the
treatment rules. As the sample size converge to infinity, the power of
Fogarty's test converges to $1$ at $\Gamma$ smaller than the design
sensitivity and to $0$ at $\Gamma$ larger than the design
sensitivity. The normal distribution in \eqref{eq:sen-value-asymp}
further approximates of the finite-sample behavior of the
sensitivity value and can be used to compute the power of a
sensitivity analysis by the fact that rejecting $r_1 \not
\prec_{\Gamma} r_2$ at level $\alpha$ is equivalent to
$\Gamma_{\alpha}^{*}(r_1 \prec_{\Gamma} r_2) \ge \Gamma$.

% \begin{remark}
%   One can check the expression under the square-root is always positive, because
%   \begin{equation}
%     \begin{split}
%       \sigma^2_{|D|}\cdot \mathbb{E}[D]^2 + \sigma^2_{D}\cdot \mathbb{E}[|D|]^2 -2 \mathbb{E}[D]\mathbb{E}[|D|] \sigma_{D, |D|}
%       \geq &\sigma^2_{|D|}\cdot \mathbb{E}[D]^2 + \sigma^2_{D}\cdot \mathbb{E}[|D|]^2 -2 \mathbb{E}[D]\mathbb{E}[|D|] \sigma_D \sigma_{|D|}\\
%       = &(\sigma_{|D|} \cdot \mathbb{E}[D] - \sigma_D \cdot \mathbb{E}[|D|])^2 \geq 0,
%     \end{split}
%   \end{equation}
%   where the first inequality follows from Cauchy's inequality.
% \end{remark}

\section{Selecting and ranking treatment rules}
\label{sec:multiple}

Next we consider the problem of comparing multiple treatment rules
with unmeasured confounding. To this end, we need to define the goal
and the statistical error we
would like to control. A problem related to this is the selecting and
ordering of multiple subpopulations
\citep{gupta1979multiple,gibbons1999selecting}, for example, given $K$
independent measurements $Y_i \sim \mathrm{N}(\mu_i,1)$ where $\mu_i$
is some characteristic of the $i$-th
subpopulation. When comparing $\mu_i$, there are many goals
we can define. In fact, \citet[p.\ 4]{gibbons1999selecting} gave a
list of 7 possible goals for ranking and selection of subpopulations and
considered them in the rest of their book. We believe at least $3$ out
of their $7$ goals have practically meaningful counterparts in
comparing treatment rules. Given $K+1$ treatment rules, $\mathcal{R} =
\{r_0, r_1, \dotsc, r_K\}$, we may ask, in terms of their values,
\begin{enumerate}
\item What is the ordering of all the treatment rules?
\item Which treatment rule is the best?
\item Which treatment rule(s) are better than the null/control $r_0$?
\end{enumerate}
In a randomized experiment or an observational study with no
unmeasured confounding, it may be possible to obtain estimates of the
value that are jointly asymptotically normal and then directly use the
methods in \citet{gibbons1999selecting}. However, as discussed in
\Cref{sec:2}, this no longer
applies when there is unmeasured confounding because the value function
may only be partially identified.

\subsection{Defining the inferential goals}

When there is unmeasured confounding, the three goals above need to be
modified because the value
function only defines a partial order among the treatment rules
(\Cref{def: prec_gamma}). We make the following definitions
\begin{definition}
  In the $\Gamma$-sensitivity model, the \emph{maximal rules} in $\mathcal{R}$
  are the ones not dominated by any other rule,
  \[
    \mathcal{R}_{\text{max},\Gamma} = \{r_i\,:\,r_i \not \prec_{\Gamma}
    r_j,~\forall j\}.
  \]
  The \emph{positive rules} are the ones which dominate the control and the
  \emph{null rules} are the ones which don't dominate the control,
  \[
    \mathcal{R}_{\text{pos},\Gamma} = \{r_i\,:\,r_0 \prec_{\Gamma} r_i
    \},~\mathcal{R}_{\text{nul},\Gamma} = \mathcal{R} \setminus
    \mathcal{R}_{\text{pos},\Gamma}.
  \]
\end{definition}
The maximal set $\mathcal{R}_{\text{max},\Gamma}$ and the null set
$\mathcal{R}_{\text{nul},\Gamma}$ are always non-empty (the latter is
because $r_0 \in \mathcal{R}_{\text{nul},\Gamma}$), become
larger as $\Gamma$ increases, and in general become the full set
$\mathcal{R}$ as $\Gamma \to \infty$.

In the rest of this section, we will consider the following
three statistical problems: for some pre-specified significance level
$\alpha > 0$,
\begin{enumerate}
\item Can we give a set of ordered pairs of treatment rules,
  $\hat{\mathcal{O}}_{\Gamma} \subset
  \{(r_i,r_j),\,i,j=0,\dotsc,K,\,i\ne j\}$,
  such that the probability that all the orderings are correct is at
  least $1 - \alpha$, that is, $\mathbb{P}(r_i \prec_{\Gamma}
  r_j,\,\forall (r_i,r_j) \in \hat{\mathcal{O}}_{\Gamma}) \ge 1-\alpha$?
\item Can we construct a subset of treatment rules,
  $\hat{\mathcal{R}}_{\text{max},\Gamma}$, such that the probability that it
  contains all maximal rules is at least $1-\alpha$, that
  is, $\mathbb{P}(\mathcal{R}_{\text{max},\Gamma} \subseteq
  \hat{\mathcal{R}}_{\text{max},\Gamma}) \ge 1-\alpha$?
\item Can we construct a subset of treatment rules,
  $\hat{\mathcal{R}}_{\text{pos},\Gamma}$, such that the probability
  that it does not cover any null rule is at least $1 - \alpha$, that is, $\mathbb{P}(\hat{\mathcal{R}}_{\text{pos},\Gamma} \cap
  \mathcal{R}_{\text{null},\Gamma} = \not\emptyset) \ge 1-\alpha$?
\end{enumerate}

Next, we will propose strategies to achieve the above statistical
goals based on the test of two treatment rules with unmeasured
confounding described in \Cref{sec:test-r_1-prec_g}.

\subsection{Goal 1: Ordering the treatment rules}
\label{subsec: ranking}

To start with, let's consider the first goal---ordering the treatment
rules, as the statistical inference is more straightforward. It is the
same as the multiple testing problem where we would like to control
the family-wise error rate (FWER) for
the collection of hypotheses, $\{H_{ij}\,:\,r_i \not \prec_{\Gamma}
r_j,\,i,j=0,\dotsc,K,\,i\ne j\}$. In principle, we can apply any
multiple testing procedure that controls the FWER. A simple example is
Bonferroni's correction for all the $K(K-1)$ tests.

In sensitivity analysis problems, we can often greatly improve the
statistical power by reducing the number of tests using a planning
sample \citep{heller2009split,zhao2018cross}. This is because
Rosenbaum's sensitivity analysis considers the worst case scenario and
is generally conservative when $\Gamma > 1$. The planning sample can
be further used to order the hypotheses so we can sequentially test
them, for example, using a fixed sequence testing procedure
\citep{koch1996statistical,westfall2001optimally}.

There are many possible ways to screen out, order, and then test the
hypotheses. Here we demonstrate one possibility:
\begin{itemize}
\item[Step 1:] Split the data into two parts. The first part
  is used for planning and the second part for testing.
\item[Step 2:] For every pair of treatment rules $(r_i,r_j)$, use the
  planning sample to estimate population parameters in the asymptotic
  distribution of the sensitivity value \eqref{eq:sen-value-asymp}.
\item[Step 3:] Compute the approximate power of testing $H_{ij}:~r_i
  \not \prec_{\Gamma} r_j$ in the testing sample using
  \eqref{eq:sen-value-asymp}. Order the hypotheses by the estimated
  power, from highest to lowest.
\item[Step 4:] Sequentially test the ordered hypotheses using the
  testing sample at level $\alpha$, until one hypothesis is rejected.
\item[Step 5:] Output a Hasse diagram of the treatment rules by using all
  the rejected hypotheses.
\end{itemize}

A Hasse diagram is an informative graph to represent a partial order
(in our case, $\prec_{\Gamma}$). In this diagram, each treatment rule
is represented by a vertex and an edge goes upward from rule
$r_i$ to rule $r_j$ if $r_i \prec_{\Gamma} r_j$ and there exists no $r_k$
such that $r_i \prec_{\Gamma} r_k$ and $r_k \prec_{\Gamma} r_j$.

Due to transitivity of a partial order, an
upward path from $r_i$ to
$r_j$ in the Hasse diagram {(for example $r_0$ to $r_3$ in Figure 1,
  $\Gamma = 1.3$)} indicates that $r_i \prec_{\Gamma} r_j$, even if we could not directly
reject
$r_i \not \prec_{\Gamma} r_j$ in Step 4. The next proposition shows
that the above multiple testing procedure also controls the FWER for all the
apparent and implied orders represented by the Hasse diagram.

\begin{proposition}
  Let $\hat{\mathcal{O}}_{\Gamma} \subset \{(r_i,r_j),\,i\ne j\}$ be
  a random set of ordered treatment rules obtained using the
  procedure above or any other multiple testing procedure. Let
  \[
    \hat{\mathcal{O}}_{\Gamma,\text{ext}} = \hat{\mathcal{O}}_{\Gamma}
    \bigcup \{(r_i,r_j)\,:\,\exists\, k_1,\dotsc,k_m~\text{such that}~
    (r_i,r_{k_1}),(r_{k_1},r_{k_2}),\dotsc,(r_{k_m},r_j)
    \in \hat{\mathcal{O}}_{\Gamma}\}
  \]
  be the extended set implied from the Hasse diagram. Then FWER with
  respect to $\hat{\mathcal{O}}_{\Gamma,\text{ext}}$ is the same as FWER with
  respect to $\hat{\mathcal{O}}_{\Gamma}$:
  \[
    \mathbb{P}(r_i \prec_{\Gamma}
    r_j,\,\forall (r_i,r_j) \in \hat{\mathcal{O}}_{\Gamma}) =
    \mathbb{P}(r_i \prec_{\Gamma}
    r_j,\,\forall (r_i,r_j) \in \hat{\mathcal{O}}_{\Gamma,\text{ext}}).
  \]
\end{proposition}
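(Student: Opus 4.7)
The plan is to prove the two probabilities are equal by showing that the two underlying events are in fact identical as sets, not just equiprobable. Let
\[
  E = \{r_i \prec_{\Gamma} r_j,\,\forall (r_i,r_j) \in \hat{\mathcal{O}}_{\Gamma}\},
  \qquad
  E_{\text{ext}} = \{r_i \prec_{\Gamma} r_j,\,\forall (r_i,r_j) \in \hat{\mathcal{O}}_{\Gamma,\text{ext}}\}.
\]
Since $\hat{\mathcal{O}}_{\Gamma} \subseteq \hat{\mathcal{O}}_{\Gamma,\text{ext}}$ by construction, the inclusion $E_{\text{ext}} \subseteq E$ is immediate. The entire content of the proposition is therefore the reverse inclusion $E \subseteq E_{\text{ext}}$, and the single ingredient needed for it is the transitivity of $\prec_{\Gamma}$, which was established in Section 2.2 of the excerpt.

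For the reverse inclusion I would fix a sample point $\omega \in E$ and take any pair $(r_i,r_j) \in \hat{\mathcal{O}}_{\Gamma,\text{ext}}(\omega)$. If $(r_i,r_j)$ already lies in $\hat{\mathcal{O}}_{\Gamma}(\omega)$, then $r_i \prec_{\Gamma} r_j$ holds by the defining event $E$. Otherwise, by the definition of the extended set, there exist indices $k_1,\dotsc,k_m$ such that the chain
\[
  (r_i,r_{k_1}),\,(r_{k_1},r_{k_2}),\,\dotsc,\,(r_{k_m},r_j)
\]
is contained in $\hat{\mathcal{O}}_{\Gamma}(\omega)$. Because $\omega \in E$, each of these consecutive pairs satisfies the relation $\prec_{\Gamma}$. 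A straightforward induction on the chain length $m+1$, using transitivity of $\prec_{\Gamma}$ at each step, yields $r_i \prec_{\Gamma} r_j$. Hence $\omega \in E_{\text{ext}}$, establishing $E \subseteq E_{\text{ext}}$ and therefore $E = E_{\text{ext}}$ pointwise.

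Taking probabilities of both sides gives the stated equality, so the FWER is unchanged by passing to the transitive closure. There is essentially no obstacle here beyond the transitivity observation; the only mild care point is the combinatorial bookkeeping that every pair added to $\hat{\mathcal{O}}_{\Gamma,\text{ext}}$ really is reachable by a finite chain from $\hat{\mathcal{O}}_{\Gamma}$, which is exactly the definition of the extended set. Consequently, any multiple testing procedure controlling FWER for the pairs that are directly rejected automatically controls FWER for all the additional relations implied by the resulting Hasse diagram, which is the practically useful content of the proposition.
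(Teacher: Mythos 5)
Your proof is correct and follows essentially the same route as the paper: both establish that the two events coincide, with the inclusion $E_{\text{ext}} \subseteq E$ being immediate and the reverse inclusion resting on transitivity of $\prec_{\Gamma}$ along the chain (the paper phrases this step contrapositively, as ``a false positive in the extended set forces a false positive somewhere on the chain,'' which is logically the same induction you carry out directly). No gaps.
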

\begin{proof}
  We show the two events are equivalent. The $\subseteq$ direction is
  trivial. For $\supseteq$, notice that any false
  positive in $\hat{\mathcal{O}}_{\Gamma,\text{ext}}$, say $r_i
  \prec_{\Gamma} r_j$ implies that there is at least one false
  positive along the path from $r_i$ to $r_j$, that is, there is at
  least one false positive among $r_i \prec_{\Gamma} r_{k_1},
  r_{k_1}\prec_{\Gamma} r_{k_2},\dotsc,r_{k_m} \prec_{\Gamma}
  r_j$, which are all in $\hat{\mathcal{O}}_{\Gamma}$. Thus, any false
  positive in $\hat{\mathcal{O}}_{\Gamma,\text{ext}}$ implies that there is
  also at least one false positive in $\hat{\mathcal{O}}_{\Gamma}$.
\end{proof}

We illustrate the proposed method using the malaria
dataset. We first use half of the data to
estimate the population parameters in \eqref{eq:sen-value-asymp} for
each pair of treatment rules $(r_i, r_j)$. For every value of
$\Gamma$, we use
\eqref{eq:sen-value-asymp} to compute the asymptotic power for the
test of $H_{ij}:r_i \not \prec_{\Gamma} r_j$ using the other half of
the data. We then order the hypotheses by the estimated power, from
the highest to the lowest. In the
malaria example, when $\Gamma = 1$, the order is
\[
  H_{01}, H_{02}, H_{03}, H_{04}, H_{05}, H_{13}, H_{12}, H_{14}, H_{15}, H_{23}, \dotsc.
\]
When $\Gamma = 2$, the order becomes
\[
  H_{02}, H_{01}, H_{03}, H_{04}, H_{05}, H_{12}, H_{13}, H_{14}, H_{15}, H_{45}, \dotsc.
\]
Finally we follow Steps 4 and 5 above. We obtained Hasse diagrams for
a variety of $\Gamma$, which are shown in \Cref{fig:
  hasse malaria hybrid}. As a baseline for comparison, \Cref{fig:
  malaria hasse Bonferroni} shows the Hasse diagrams obtained by a
simple Bonferroni
adjustment for all $K(K-1) = 30$ hypotheses using all the data. Although only half of the data is used to test, ordering the hypotheses not only
identified all the discoveries that the Bonferroni procedure
identified, but also made one extra discovery when $\Gamma = 1.3$,
$1.5$, $2.5$, $3.5$, and $4$, and
two extra discoveries when $\Gamma = 1$, $1.8$, $2$, and $3$.

\begin{figure}[h] \centering
  \subfigure{
    \begin{tikzpicture}[scale=0.35]
      \node (5) at (-2,0) {$r_5$};
      \node (4) at (0,0) {$r_4$};
      \node (3) at (2,0) {$r_3$};
      \node (2) at (0,-2) {$r_2$};
      \node (1) at (0, -4) {$r_1$};
      \node (0) at (0, -6) {$r_0$};
      \draw (5) -- (2) -- (4) -- (2) -- (3) -- (2) -- (1) -- (0);
      \node [below of=0] {
        \begin{tabular}{c}
          $\Gamma = 1$ \\
          $|\hat{\mathcal{O}}| = 12$
        \end{tabular}
      };
    \end{tikzpicture}}
  \qquad
  \subfigure{
    \begin{tikzpicture}[scale=0.35]
      \node (3) at (0,2) {$r_3$};
      \node (4) at (-2,0) {$r_4$};
      \node (2) at (0,0) {$r_2$};
      \node (5) at (2,0) {$r_5$};
      \node (1) at (0, -2) {$r_1$};
      \node (0) at (0, -4) {$r_0$};
      \draw (3) -- (2) -- (1) -- (5) -- (1) -- (4) -- (1) -- (0);
      \node [below of=0] {
        \begin{tabular}{c}
          $\Gamma = 1.3$ \\
          $|\hat{\mathcal{O}}| = 10$
        \end{tabular}
      };
    \end{tikzpicture}}
  \qquad
  \subfigure{
    \begin{tikzpicture}[scale=0.35]
      \node (5) at (-3,0) {$r_5$};
      \node (4) at (-1,0) {$r_4$};
      \node (3) at (1,0) {$r_3$};
      \node (2) at (3,0) {$r_2$};
      \node (1) at (0, -2) {$r_1$};
      \node (0) at (0, -4) {$r_0$};
      \draw (5) -- (1) -- (4) -- (1) -- (3) -- (1) -- (2) -- (1) -- (0);
      \node [below of=0] {
        \begin{tabular}{c}
          $\Gamma = 1.5$ \\
          $|\hat{\mathcal{O}}| = 9$
        \end{tabular}
      };
    \end{tikzpicture}}
  \qquad
  \subfigure{
    \begin{tikzpicture}[scale=0.35]
      \node (2) at (-2,0) {$r_2$};
      \node (3) at (0,0) {$r_3$};
      \node (4) at (2,0) {$r_4$};
      \node (1) at (0,-2) {$r_1$};
      \node (5) at (2, -2) {$r_5$};
      \node (0) at (0, -4) {$r_0$};
      \draw (2) -- (1) -- (3) -- (1) -- (4) -- (1) -- (0) -- (5);
      \node [below of=0] {
        \begin{tabular}{c}
          $\Gamma = 1.8$ \\
          $|\hat{\mathcal{O}}| = 8$
        \end{tabular}
      };
    \end{tikzpicture}} \\ \vspace{0.5cm}
  \subfigure{
    \begin{tikzpicture}[scale=0.35]
      \node (2) at (-2,0) {$r_2$};
      \node (3) at (2,0) {$r_3$};
      \node (1) at (0,-2) {$r_1$};
      \node (4) at (-2,-2) {$r_4$};
      \node (5) at (2, -2) {$r_5$};
      \node (0) at (0, -4) {$r_0$};
      \draw (2) -- (1) -- (3) -- (1) -- (0) -- (4) -- (0) -- (5);
      \node [below of=0] {
        \begin{tabular}{c}
          $\Gamma = 2$ \\
          $|\hat{\mathcal{O}}| = 7$
        \end{tabular}
      };
    \end{tikzpicture}}
  \qquad
  \subfigure{
    \begin{tikzpicture}[scale=0.33]
      \node (2) at (0,0) {$r_2$};
      \node (1) at (0,-2) {$r_1$};
      \node (3) at (-2,-2) {$r_3$};
      \node (4) at (2,-2) {$r_4$};
      \node (5) at (4, -2) {$r_5$};
      \node (0) at (0, -4) {$r_0$};
      \draw (2) -- (1) -- (0) -- (3) -- (0) -- (4) -- (0) -- (5);
      \node [below of=0] {
        \begin{tabular}{c}
          $\Gamma = 2.5$ \\
          $|\hat{\mathcal{O}}| = 6$
        \end{tabular}
      };
    \end{tikzpicture}}
  \qquad
  \subfigure{
    \begin{tikzpicture}[scale=0.33]
      \node (1) at (-4,-2) {$r_1$};
      \node (2) at (-2,-2) {$r_2$};
      \node (3) at (0,-2) {$r_3$};
      \node (4) at (2,-2) {$r_4$};
      \node (5) at (4, -2) {$r_5$};
      \node (0) at (0, -4) {$r_0$};
      \draw (1) -- (0) -- (2) -- (0) -- (3) -- (0) -- (4) -- (0) -- (5);
      \node [below of=0] {
        \begin{tabular}{c}
          $\Gamma = 3.0$ \\
          $|\hat{\mathcal{O}}| = 5$
        \end{tabular}
      };
    \end{tikzpicture}} \\ \vspace{0.5cm}
  \qquad
  \subfigure{
    \begin{tikzpicture}[scale=0.33]
      \node (1) at (0,-2) {$r_1$};
      \node (2) at (-2,-2) {$r_2$};
      \node (3) at (2,-2) {$r_3$};
      \node (0) at (0, -4) {$r_0$};
      \node (4) at (2, -4) {$r_4$};
      \node (5) at (4, -4) {$r_5$};
      \draw (1) -- (0) -- (2) -- (0) -- (3) -- (0);
      \draw (5);
      \draw (4);
      \node [below of=0] {
        \begin{tabular}{c}
          $\Gamma = 3.5$ \\
          $|\hat{\mathcal{O}}| = 3$
        \end{tabular}
      };
    \end{tikzpicture}}
  \qquad
  \subfigure{
    \begin{tikzpicture}[scale=0.33]
      \node (1) at (0,-2) {$r_1$};
      \node (2) at (-4,-2) {$r_2$};
      \node (0) at (-2, -4) {$r_0$};
      \node (3) at (0,-4) {$r_3$};
      \node (4) at (2, -4) {$r_4$};
      \node (5) at (4, -4) {$r_5$};
      \draw (2) -- (0) -- (1);
      \draw (3);
      \draw (5);
      \draw (4);
      \node [below of=0] {
        \begin{tabular}{c}
          $\Gamma = 4$ \\
          $|\hat{\mathcal{O}}| = 2$
        \end{tabular}
      };
    \end{tikzpicture}}
  \qquad
  \subfigure{
    \begin{tikzpicture}[scale=0.33]
      \node (1) at (-6,-4) {$r_1$};
      \node (2) at (-4,-4) {$r_2$};
      \node (0) at (-2, -4) {$r_0$};
      \node (3) at (0,-4) {$r_3$};
      \node (4) at (2, -4) {$r_4$};
      \node (5) at (4, -4) {$r_5$};
      \node [below of=0] {
        \begin{tabular}{c}
          $\Gamma = 6$ \\
          $|\hat{\mathcal{O}}| = 0$
        \end{tabular}
      };
    \end{tikzpicture}}

  \caption{Malaria example: Hasse diagrams obtained using sample-splitting and fixed
    sequence testing; $\alpha = 0.1$.}
  \label{fig: hasse malaria hybrid}
\end{figure}
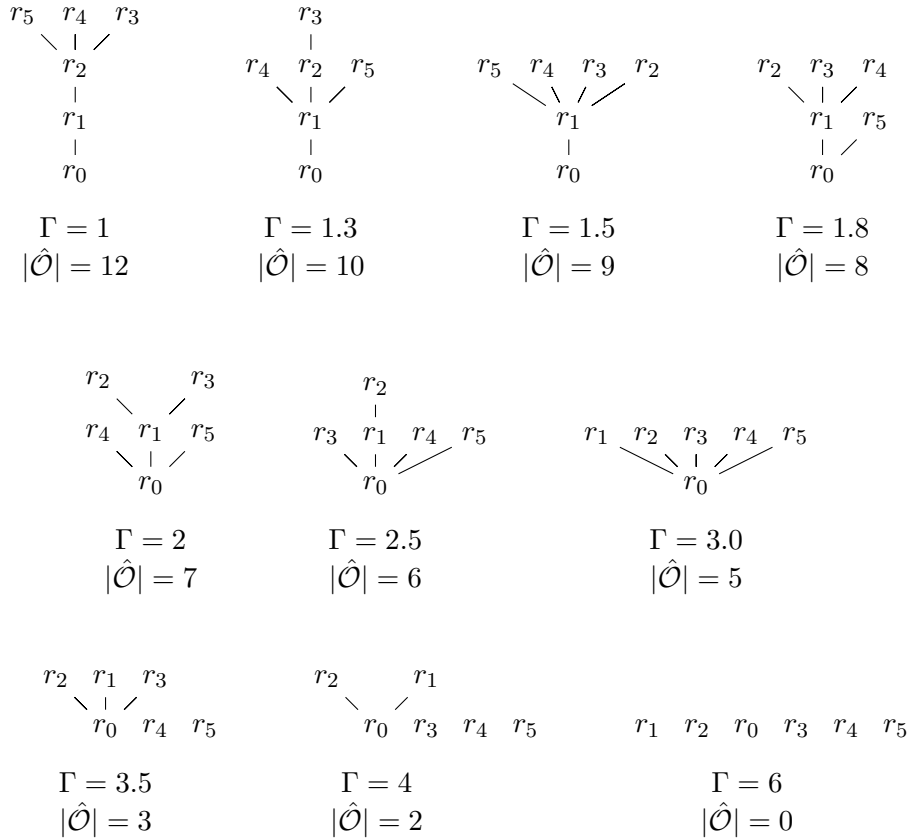

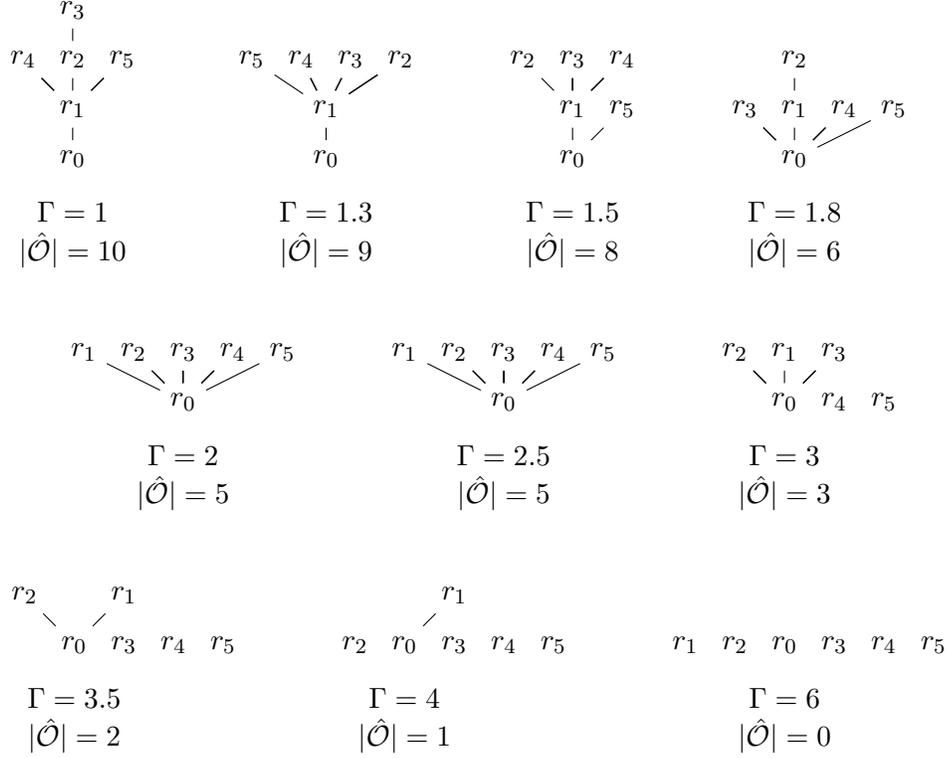
\begin{figure}
  \centering

  \subfigure{
    \begin{tikzpicture}[scale=0.33]
      \node (3) at (0,2) {$r_3$};
      \node (4) at (-2,0) {$r_4$};
      \node (2) at (0,0) {$r_2$};
      \node (5) at (2,0) {$r_5$};
      \node (1) at (0, -2) {$r_1$};
      \node (0) at (0, -4) {$r_0$};
      \draw (3) -- (2) -- (1) -- (5) -- (1) -- (4) -- (1) -- (0);
      \node [below of=0] {
        \begin{tabular}{c}
          $\Gamma = 1$ \\
          $|\hat{\mathcal{O}}| = 10$
        \end{tabular}
      };
    \end{tikzpicture}}
  \qquad
  \subfigure{
    \begin{tikzpicture}[scale=0.33]
      \node (5) at (-3,0) {$r_5$};
      \node (4) at (-1,0) {$r_4$};
      \node (3) at (1,0) {$r_3$};
      \node (2) at (3,0) {$r_2$};
      \node (1) at (0, -2) {$r_1$};
      \node (0) at (0, -4) {$r_0$};
      \draw (5) -- (1) -- (4) -- (1) -- (3) -- (1) -- (2) -- (1) -- (0);
      \node [below of=0] {
        \begin{tabular}{c}
          $\Gamma = 1.3$ \\
          $|\hat{\mathcal{O}}| = 9$
        \end{tabular}
      };
    \end{tikzpicture}}
  \qquad
  \subfigure{
    \begin{tikzpicture}[scale=0.33]
      \node (2) at (-2,0) {$r_2$};
      \node (3) at (0,0) {$r_3$};
      \node (4) at (2,0) {$r_4$};
      \node (1) at (0,-2) {$r_1$};
      \node (5) at (2, -2) {$r_5$};
      \node (0) at (0, -4) {$r_0$};
      \draw (2) -- (1) -- (3) -- (1) -- (4) -- (1) -- (0) -- (5);
      \node [below of=0] {
        \begin{tabular}{c}
          $\Gamma = 1.5$ \\
          $|\hat{\mathcal{O}}| = 8$
        \end{tabular}
      };
    \end{tikzpicture}}
  \qquad
  \subfigure{
    \begin{tikzpicture}[scale=0.33]
      \node (2) at (0,0) {$r_2$};
      \node (1) at (0,-2) {$r_1$};
      \node (3) at (-2,-2) {$r_3$};
      \node (4) at (2,-2) {$r_4$};
      \node (5) at (4, -2) {$r_5$};
      \node (0) at (0, -4) {$r_0$};
      \draw (2) -- (1) -- (0) -- (3) -- (0) -- (4) -- (0) -- (5);
      \node [below of=0] {
        \begin{tabular}{c}
          $\Gamma = 1.8$ \\
          $|\hat{\mathcal{O}}| = 6$
        \end{tabular}
      };
    \end{tikzpicture}} \\ \vspace{0.5cm}
  \qquad
  \subfigure{
    \begin{tikzpicture}[scale=0.33]
      \node (1) at (-4,-2) {$r_1$};
      \node (2) at (-2,-2) {$r_2$};
      \node (3) at (0,-2) {$r_3$};
      \node (4) at (2,-2) {$r_4$};
      \node (5) at (4, -2) {$r_5$};
      \node (0) at (0, -4) {$r_0$};
      \draw (1) -- (0) -- (2) -- (0) -- (3) -- (0) -- (4) -- (0) -- (5);
      \node [below of=0] {
        \begin{tabular}{c}
          $\Gamma = 2$ \\
          $|\hat{\mathcal{O}}| = 5$
        \end{tabular}
      };
    \end{tikzpicture}}
  \qquad
  \subfigure{
    \begin{tikzpicture}[scale=0.33]
      \node (1) at (-4,-2) {$r_1$};
      \node (2) at (-2,-2) {$r_2$};
      \node (3) at (0,-2) {$r_3$};
      \node (4) at (2,-2) {$r_4$};
      \node (5) at (4, -2) {$r_5$};
      \node (0) at (0, -4) {$r_0$};
      \draw (1) -- (0) -- (2) -- (0) -- (3) -- (0) -- (4) -- (0) -- (5);
      \node [below of=0] {
        \begin{tabular}{c}
          $\Gamma = 2.5$ \\
          $|\hat{\mathcal{O}}| = 5$
        \end{tabular}
      };
    \end{tikzpicture}
  }
  \qquad
  \subfigure{
    \begin{tikzpicture}[scale=0.33]
      \node (1) at (0,-2) {$r_1$};
      \node (2) at (-2,-2) {$r_2$};
      \node (3) at (2,-2) {$r_3$};
      \node (0) at (0, -4) {$r_0$};
      \node (4) at (2, -4) {$r_4$};
      \node (5) at (4, -4) {$r_5$};
      \draw (1) -- (0) -- (2) -- (0) -- (3) -- (0);
      \node [below of=0] {
        \begin{tabular}{c}
          $\Gamma = 3$ \\
          $|\hat{\mathcal{O}}| = 3$
        \end{tabular}
      };
    \end{tikzpicture}} \\ \vspace{0.5cm}
  \qquad
  \subfigure{
    \begin{tikzpicture}[scale=0.33]
      \node (1) at (0,-2) {$r_1$};
      \node (2) at (-4,-2) {$r_2$};
      \node (0) at (-2, -4) {$r_0$};
      \node (3) at (0,-4) {$r_3$};
      \node (4) at (2, -4) {$r_4$};
      \node (5) at (4, -4) {$r_5$};
      \draw (2) -- (0) -- (1);
      \node [below of=0] {
        \begin{tabular}{c}
          $\Gamma = 3.5$ \\
          $|\hat{\mathcal{O}}| = 2$
        \end{tabular}
      };
    \end{tikzpicture}
  }
  \qquad
  \subfigure{
    \begin{tikzpicture}[scale=0.33]
      \node (1) at (0,-2) {$r_1$};
      \node (0) at (-2, -4) {$r_0$};
      \node (2) at (-4,-4) {$r_2$};
      \node (3) at (0,-4) {$r_3$};
      \node (4) at (2, -4) {$r_4$};
      \node (5) at (4, -4) {$r_5$};
      \draw (1) -- (0);
      \node [below of=0] {
        \begin{tabular}{c}
          $\Gamma = 4$ \\
          $|\hat{\mathcal{O}}| = 1$
        \end{tabular}
      };
    \end{tikzpicture}
  }
  \qquad
  \subfigure{
    \begin{tikzpicture}[scale=0.33]
      \node (1) at (-6,-4) {$r_1$};
      \node (2) at (-4,-4) {$r_2$};
      \node (0) at (-2, -4) {$r_0$};
      \node (3) at (0,-4) {$r_3$};
      \node (4) at (2, -4) {$r_4$};
      \node (5) at (4, -4) {$r_5$};
      \node [below of=0] {
        \begin{tabular}{c}
          $\Gamma = 6$ \\
          $|\hat{\mathcal{O}}| = 0$
        \end{tabular}
      };
    \end{tikzpicture}
  }
  \caption{Malaria example: Hasse diagrams obtained using
    Bonferroni's adjustment; $\alpha = 0.1$.}
  \label{fig: malaria hasse Bonferroni}
\end{figure}

\subsection{Goal 2: Selecting the best rules}

Next we consider constructing a set that covers all the maximal
rules. Our proposal is based on the following observation: if the
hypothesis $r_i \not \prec_{\Gamma} r_j$ can be rejected, then $r_i$ is
unlikely a maximal rule. More precisely, because $r_i \in
\mathcal{R}_{\text{max},\Gamma}$ implies that $r_i \not \prec_{\Gamma}
r_j$ must be true, by the definition of the type I error of a
hypothesis test,
\[
  \mathbb{P}(r_i \not \prec_{\Gamma} r_j~\text{is rejected} \,|\,r_i \in
  \mathcal{R}_{\text{max},\Gamma}) \le \alpha.
\]
This suggests that we can derive a set of maximal rules from an
estimated set of partial orders:
\begin{equation} \label{eq:est-max}
  \hat{\mathcal{R}}_{\text{max}, \Gamma} = \{r_i\,:\,(r_i,r_j) \not \in
  \hat{\mathcal{O}}_{\Gamma},~\forall j\}.
\end{equation}
In other words, $\hat{\mathcal{R}}_{\text{max}, \Gamma}$ contains all
the ``leaves'' in the Hasse diagram of
$\hat{\mathcal{O}}_{\Gamma}$ (a leaf in the Hasse diagram is a vertex
who has no edge going upward). For example, in Figure \ref{fig: hasse
  malaria hybrid}, the leaves are $\{r_3, r_4, r_5\}$ when $\Gamma = 1.0$ and $\{r_2, r_3, r_4, r_5\}$ when $\Gamma = 1.5$.
Because $\big\{\mathcal{R}_{\text{max},\Gamma} \not \subseteq
\hat{\mathcal{R}}_{\text{max},\Gamma}\big\} = \big\{\exists\,
i \in \mathcal{R}_{\text{max},\Gamma}~\text{such that}~(r_i,r_j)\in
\hat{\mathcal{O}}_{\Gamma}~\text{for some}~j\big\}$, the estimated set
of maximal rules satisfies
$\mathbb{P}(\mathcal{R}_{\text{max},\Gamma} \not \subseteq
\hat{\mathcal{R}}_{\text{max},\Gamma}) \le \alpha$ as desired whenever
$\hat{\mathcal{O}}_{\Gamma}$ strongly controls the FWER at level $\alpha$.

Equation \eqref{eq:est-max} suggests that only one hypothesis $r_i
\not \prec_{\Gamma} r_j$ needs to be rejected in order to exclude
$r_i$ from $\hat{\mathcal{R}}_{\text{max}, \Gamma}$. This means that,
when the purpose is to select the maximal rules, we do not need to test
$r_i \not \prec_{\Gamma} r_j$ if another hypothesis $r_i \not
\prec_{\Gamma} r_k$ for some $k \ne j$ is already rejected. Therefore, we
can modify the procedure of finding $\hat{\Omega}_{\Gamma}$ to further
decrease the size of $\hat{\mathcal{R}}_{\text{max}, \Gamma}$ obtained
from \eqref{eq:est-max}. For example, in the five-step procedure
demonstrated in \Cref{subsec: ranking}, we can further replace Step 3 by:
\begin{itemize}
\item[Step 3':] After ordering the hypotheses in Step 3,
  remove any hypothesis $H_{ij}:\,r_i \prec_{\Gamma} r_j$ if
  there is already a hypothesis $H_{ik}$ appearing before $H_{ij}$ for
  some $k \ne j$.
\end{itemize}

Again we use the malaria example to illustrate the selection of best
treatment rules. As an example, when $\Gamma = 2.0$, Step 3' reduced the
original sequence of hypotheses to the following:
\[
  H_{02}, H_{12}, H_{45}, H_{35}, H_{53}, H_{21}.
\]
We used the hold-out samples to test the hypotheses sequentially at
level $\alpha = 0.1$ and stopped at $H_{45}$. Therefore, a level
$\alpha = 0.1$ confidence set of the set of maximal elements is
$\{r_2, r_3, r_4, r_5\}$ when $\Gamma = 2$. Table \ref{tbl: malaria
  max set result} lists the estimated maximal set
$\hat{\mathcal{R}}_{\text{max}, \Gamma}$ for $\Gamma = 1, 1.3, 1.5, 1.8, 2, \text{and }2.5$.

\begin{table}[h]
  \centering
  \caption{$\hat{\mathcal{R}}_{\text{max}, \Gamma}$ for different choices of $\Gamma$.}
  \label{tbl: malaria max set result}
  \begin{tabular}{cc|cc}
    \toprule
    $\Gamma$ &$\hat{\mathcal{R}}_{\text{max}, \Gamma}$ &$\Gamma$
    &$\hat{\mathcal{R}}_{\text{max}, \Gamma}$  \\
    \midrule
    1.0     & $\{r_3, r_4, r_5\}$       & 2.5 &$\{r_2, r_3, r_4, r_5\}$ \\
    1.3     & $\{r_3, r_4, r_5\}$       & 3.0 &$\{r_1, r_2, r_3, r_4, r_5\}$\\
    1.5     &  $\{r_2, r_3, r_4, r_5\}$ & 3.5 &$\{r_1, r_2, r_3, r_4, r_5\}$ \\
    1.8     &  $\{r_2, r_3, r_4, r_5\}$ & 4.0 &$\{r_1, r_2, r_3, r_4, r_5\}$ \\
    2.0     & $\{r_2, r_3, r_4, r_5\}$  & 6.0 &$\{r_0, r_1, r_2, r_3, r_4, r_5\}$\\
    \bottomrule
  \end{tabular}
\end{table}

\subsection{Goal 3: Selecting the positive rules}

Finally we consider how to select treatment rules that are better than
a control rule. This can also be transformed to a multiple testing
problem for the hypotheses
$H_{0i}:\,r_0\not\prec_{\Gamma}r_i,~i=1,\dotsc,K$. Let
$\hat{\mathcal{R}}_{\text{pos},\Gamma}$ be the collection of rejected hypotheses
following some multiple testing procedure. By definition of FWER,
$\mathbb{P}(\hat{\mathcal{R}}_{\text{pos},\Gamma} \cap
\mathcal{R}_{\text{nul},\Gamma}) \le \alpha$ if the multiple testing
procedure strongly controls FWER at level $\alpha$. As an example, one
can modify the procedure in \eqref{subsec: ranking} to select the
positive rules by only considering $H_{0i},~i=1,\dotsc,K$ in Step 3.

In practice, a small increase of the value function, though
statistically significant, may not justify a policy change. In this
case, it may be desirable to estimate the positive rules that dominate
the control rule by margin $\delta$,
$\mathcal{R}_{\text{pos},\Gamma,\delta} = \{r_i:\,r_0
\prec_{\Gamma,\delta} r_i\}$. To obtain an estimate of
$\mathcal{R}_{\text{pos},\Gamma,\delta}$, one can further modify the
procedure in \eqref{subsec: ranking} by replacing the hypothesis $H_{0i}:\,r_0
\not \prec_{\Gamma} r_i$ with the stronger $r_0
\not\prec_{\Gamma,\delta} r_i$.

\begin{table}[h]
  \centering
  \caption{Estimated positive rules $\hat{\mathcal{R}}_{\text{pos},\Gamma, \delta}$ for different choices of $\Gamma$ and $\delta$. % $\Gamma$ measures the deviation from the ``no unmeasured confounding assumption'' and $\delta$ measures the decrease in the number of Plasmodium falciparum parasites per milliliter of blood samples, averaged over the entire study samples.
  }
  \label{tbl: malaria null set result}
  \begin{tabular}{l|cccc}
    \toprule
    &$\Gamma = 1$ &$\Gamma = 1.3$ &$\Gamma = 1.5$&$\Gamma = 1.8$  \\
    \midrule
    $\delta = 0$ & $\{r_1, r_2, r_3, r_4, r_5\}$ &$\{r_1, r_2, r_3, r_4, r_5\}$& $\{r_1, r_2, r_3, r_4, r_5\}$&$\{r_1, r_2, r_3, r_4, r_5\}$ \\
    $\delta = 1$ & $\{r_1, r_2, r_3, r_4, r_5\}$ &$\{r_1, r_2, r_3, r_4, r_5\}$& $\{r_1, r_2, r_3, r_4, r_5\}$&$\{r_1, r_2, r_3, r_4, r_5\}$ \\
    $\delta = 2$ & $\{r_1, r_2, r_3, r_4, r_5\}$ &$\{r_1, r_2, r_3, r_4, r_5\}$& $\{r_1, r_2, r_3, r_4, r_5\}$&$\{r_1, r_2, r_3, r_4, r_5\}$ \\
    $\delta = 4$ & $\{r_1, r_2, r_3, r_4, r_5\}$ &$\{r_1, r_2, r_3, r_4, r_5\}$& $\{r_1, r_2, r_3, r_4, r_5\}$&$\{r_1, r_2, r_3, r_4, r_5\}$ \\
    $\delta = 6$ & $\{r_1, r_2, r_3, r_4, r_5\}$ &$\{r_1, r_2, r_3, r_4, r_5\}$& $\{r_2, r_3, r_4,r_5\}$&$\{r_2\}$ \\
    \midrule
    &$\Gamma = 2.0$ &$\Gamma = 2.5$&$\Gamma = 3.0$ &  \\
    \midrule
    $\delta = 0$  &$\{r_1, r_2, r_3, r_4, r_5\}$ & $\{r_1, r_2, r_3, r_4, r_5\}$& $\{r_1, r_2, r_3, r_4, r_5\}$ &  \\
    $\delta = 1$  & $\{r_1, r_2, r_3, r_4, r_5\}$ & $\{r_1, r_2, r_3, r_4, r_5\}$  & $\{r_1, r_2, r_3\}$ & \\
    $\delta = 2$  & $\{r_1, r_2, r_3, r_4, r_5\}$ & $\{r_1, r_2, r_3\}$& $\{r_1, r_2\}$ & \\
    $\delta = 4$  & $\{r_1, r_2, r_3\}$ & $\emptyset$ & $\emptyset$ & \\
    $\delta = 6$  & $\emptyset$ & $\emptyset$  & $\emptyset$ & \\
    \midrule
    & $\Gamma = 3.5$ &$\Gamma = 4.0$  &$\Gamma = 6.0$  &\\
    \midrule
    $\delta = 0$ & $\{r_1, r_2, r_3\}$ & $\{r_1, r_2\}$ & $\emptyset$  & \\
    $\delta = 1$ & $\{r_1, r_2\}$ & $\{r_1\}$ & $\emptyset$  \\
    $\delta = 2$ & $\emptyset$ & $\emptyset$ & $\emptyset$\\
    $\delta = 4$ & $\emptyset$  & $\emptyset$ & $\emptyset$\\
    $\delta = 6$ & $\emptyset$  & $\emptyset$ & $\emptyset$\\
    \bottomrule
  \end{tabular}
\end{table}

We construct $\hat{\mathcal{R}}_{\text{pos},\Gamma, \delta}$ with
various choices of $\Gamma$ and $\delta$ for the malaria example. In
this case, $\delta$ measures the decrease in the number of Plasmodium
falciparum parasites per milliliter of blood samples averaged over the
entire study samples. Table \ref{tbl: malaria null set result} gives a
summary of the results. As expected, the estimated set of positive
rules becomes smaller as $\Gamma$ or $\delta$ increases. We observe
that, although $r_1,r_2$---assigning treatment to those under $7$ and
$20$---are unlikely the optimal rules if there is no
unmeasured confounding (\Cref{tbl: malaria max set result}), they are
more robust to unmeasured confounding than the others, dominating the
control rule up till $\Gamma = 4.0$ (\Cref{tbl: malaria null set
  result}).

\section{Simulations}
\label{sec:simulations}

  We study and report the performance of three methods of selecting the
  positive rules $\mathcal{R}_{\text{pos},\Gamma, \delta}$ using
  numerical simulations in this section. Simulation results for
  selecting the maximal rules are reported in the Supplementary Materials.
We constructed $5$ or $10$ cohorts of data where the
treatment effect is constant within each cohort but different between
the cohorts. After matching, the treated-minus-control difference in
each cohort was normally distributed with mean
\begin{enumerate}
\item $\mu = (0.5, 0.25, 0.25, 0.15, 0.05)$,
\item $\mu = (0.5, 0.2, -1.0, 0.2, 0.5)$,
\item $\mu = (0.5, 0.5, 0.25, 0.25, 0.25, 0.25, 0.15, 0.15, 0.05,
  0.05)$,
\item $\mu = (0.5, 0.3, 0.2, 0.0, -1.0, -1.0, 0.5, 0.5, 1.0, 1.0)$.
\end{enumerate}
The size of each cohort was either $100$ or $250$.

Three methods of selecting positive rules were considered:
\begin{enumerate}
\item {\bf Bonferroni:} The full data is used to test the hypotheses
  $H_{0i}:\,r_0 \not \prec_{\Gamma} r_i$ and the Bonferroni correction
  is used to adjust for multiple comparisons.
\item {\bf Ordering by power:} This is the procedure described in
  \Cref{subsec: ranking} using sample splitting and fixed sequence testing.
\item {\bf Ordering by value function:} This is the same as above except that
  the hypotheses are ordered by their estimated value at $\Gamma = 1$.
\end{enumerate}
For the second and third methods, we used either a half or a quarter
of the matched pairs (randomly chosen) to order the
hypotheses. Extra simulation results using different
  split proportions are reported in Supplementary Materials. This simulation was replicated $1000$ times to report the power and the
error rate of the methods. The power is defined as the average size of
the estimated set of positive rules
$\hat{\mathcal{R}}_{\text{pos},\Gamma}$ and the error rate is $1 -
P(\hat{\mathcal{R}}_{\text{pos},\Gamma} \subseteq
\mathcal{R}_{\text{pos}, \Gamma})$ with nominal level
$0.05$.

The results of this simulation study are reported in \Cref{tbl: simu
  res mu_1; 5 cohorts,tbl: simu res mu_2; 5 cohorts,tbl: simu res
  mu_1; 10 cohorts,tbl: simu res mu_2; 10 cohorts}. The error rate is
controlled under the nominal level in most cases and is usually quite
conservative. The conservativeness is not surprising because Rosenbaum's
sensitivity analysis is a worst-case analysis. In terms of power, the
five methods being compared performed very similarly assuming no
unmeasured confounding ($\Gamma =
1$). Bonferroni is still competitive at $\Gamma = 1.5$, but ordering
the hypotheses by (the estimated) power, though losing some sample for
testing, can be much more powerful at larger values of $\Gamma$.
For instance, in \Cref{tbl: simu res mu_1; 10 cohorts} when $\Gamma = 3.0$, two power-based methods are more than twice as powerful as the Bonferroni method. We observe that only using a small planning sample ($25\%$) seems to
  work well in the simulations. This is not too surprising given our
  theoretical results. \Cref{thm: asymp kappa} suggest that only the
  first two moments of $D$ and $|D|$ are needed to estimate the sensitivity
  value asymptotically.

\begin{table}[h]
  \centering
  \captionsetup{singlelinecheck=off}
  \caption[]{Power and error rate (separated by/) of three
    methods estimating
    $\mathcal{R}_{\text{pos},\Gamma}$. Power is defined as the size of
    $\hat{\mathcal{R}}_{\text{pos},\Gamma}$ and
    error rate is defined as $1 -
    P(\hat{\mathcal{R}}_{\text{pos},\Gamma} \subseteq
    \mathcal{R}_{\text{pos}, \Gamma})$. Treatment effect in the
    5 cohorts is given by $\mu = (0.5, 0.25, 0.25, 0.15, 0.05)$. The true $\mathcal{R}_{\text{pos},\Gamma}$ is listed below each $\Gamma$ value.
  }

  \label{tbl: simu res mu_1; 5 cohorts}
  \begin{tabular}{cccccccc}
    \toprule
    \multirow{2}{*}{Cohort size} & \multirow{2}{*}{Method} &$\Gamma = 1.0$ &$\Gamma = 1.8$ &$\Gamma = 2.0$
    &$\Gamma = 2.3$&$\Gamma = 3.0$ \\
                                 & & $\{r_1, \dotsc, r_5\}$ & $\{r_1, \dotsc, r_4\}$ & $\{r_1, r_2, r_3\}$
    & $\{r_1,r_2\}$ & $\{r_1\}$ \\
    \midrule
    \multirow{5}{*}{250}
                                 & Bonferroni
                                                           & 5.00 / 0.00 & 2.54 / 0.01 & 1.60 / 0.03 & 0.72 / 0.02 & 0.11 /
                                                                                                                     0.00 \\
                                 &Value (50\%)
                                                           & 5.00 / 0.00 & 0.51 / 0.07 & 0.08 / 0.04 & 0.00 / 0.00 & 0.00 / 0.00 \\
                                 &Power (50\%)
                                                           & 5.00 / 0.00 & 2.30 / 0.07 & 1.46 / 0.07 & 0.74 / 0.04 & 0.20 / 0.00 \\
                                 &Value (25\%)
                                                           & 5.00 / 0.00 & 0.73 / 0.07 & 0.18 / 0.05 & 0.03 / 0.01 & 0.00 / 0.00 \\
                                 &Power (25\%)
                                                           & 5.00 / 0.00 & 2.64 / 0.07 & 1.69 / 0.06 & 0.85 / 0.05 & 0.21 / 0.00 \\
    \midrule
    \multirow{5}{*}{100}
                                 & Bonferroni
                                                           & 4.99 / 0.00 & 1.39 / 0.02 & 0.75 / 0.02 & 0.37 / 0.02 & 0.08 / 0.00 \\
                                 & Value (50\%)
                                                           & 4.80 / 0.00 & 0.49 / 0.07 & 0.16 / 0.04 & 0.04 / 0.02 & 0.00 / 0.00 \\
                                 & Power (50\%)
                                                           & 4.77 / 0.00 & 1.15 / 0.05 & 0.75 / 0.05 & 0.38 / 0.03 & 0.15 / 0.02 \\
                                 & Value (25\%)
                                                           & 4.99 / 0.00 & 0.61 / 0.06 & 0.24 / 0.03 & 0.12 / 0.03 & 0.01 / 0.00 \\
                                 & Power (25\%)
                                                           & 4.99 / 0.00 & 1.33 / 0.05 & 0.80 / 0.05 & 0.52 / 0.05 & 0.12 / 0.00 \\
    \bottomrule
  \end{tabular}
\end{table}

\begin{table}[h]
  \centering
  \captionsetup{singlelinecheck=off}
  \caption[]{Power and error rate (separated by/) of three
    methods estimating
    $\mathcal{R}_{\text{pos},\Gamma}$. Power is defined as the size of
    $\hat{\mathcal{R}}_{\text{pos},\Gamma}$ and
    error rate is defined as $1 -
    P(\hat{\mathcal{R}}_{\text{pos},\Gamma} \subseteq
    \mathcal{R}_{\text{pos}, \Gamma})$. Treatment effect in the
    5 cohorts is given by $\mu = (0.5, 0.2, -1.0, 0.2, 0.5)$. The true $\mathcal{R}_{\text{pos},\Gamma}$ is listed below each $\Gamma$ value.
  }

  \label{tbl: simu res mu_2; 5 cohorts}
  \begin{tabular}{cccccccc}
    \toprule
    \multirow{2}{*}{Cohort size} & \multirow{2}{*}{Method} &$\Gamma = 1.0$ &$\Gamma = 1.5$ &$\Gamma = 2.0$
    &$\Gamma = 2.5$&$\Gamma = 3.5$ \\
                                 & & $\{r_1,r_2,r_4, r_5\}$ & $\{r_1, r_2, r_5\}$ &$\{r_1,r_2\}$
    & $\{r_1\}$ & $\emptyset$ \\
    \midrule
    \multirow{5}{*}{250}
                                 & Bonferroni
                                                           & 3.14 / 0.00 & 2.36 / 0.00 & 0.78 / 0.00 & 0.45 / 0.01 & 0.01 / 0.01 \\
                                 &Value (50\%)
                                                           & 3.21 / 0.00 & 1.78 / 0.00 & 0.06 / 0.02 & 0.00 / 0.00 & 0.00 / 0.00 \\
                                 &Power (50\%)
                                                           & 3.21 / 0.00 & 2.03 / 0.00 & 0.75 / 0.02 & 0.47 / 0.02 & 0.07 / 0.07 \\
                                 &Value (25\%)
                                                           & 3.25 / 0.00 & 2.21 / 0.00 & 0.07 / 0.03 & 0.00 / 0.00 & 0.00 / 0.00 \\
                                 &Power (25\%)
                                                           & 3.25 / 0.00 & 2.35 / 0.00 & 0.83 / 0.02 & 0.54 / 0.02 & 0.07 / 0.07 \\
    \midrule
    \multirow{5}{*}{100}
                                 & Bonferroni
                                                           & 3.02 / 0.00 & 1.19 / 0.00 & 0.37 / 0.00 & 0.21 / 0.01 & 0.02 / 0.02 \\
                                 & Value (50\%)
                                                           & 3.03 / 0.00 & 0.71 / 0.00 & 0.04 / 0.02 & 0.00 / 0.00 & 0.00 / 0.00 \\
                                 & Power (50\%)
                                                           & 3.02 / 0.00 & 0.93 / 0.00 & 0.43 / 0.01 & 0.29 / 0.04 & 0.08 / 0.08 \\
                                 & Value (25\%)
                                                           & 3.10 / 0.00 & 1.12 / 0.00 & 0.04 / 0.02 & 0.00 / 0.00 & 0.00 / 0.00 \\
                                 & Power (25\%)
                                                           & 3.11 / 0.00 & 1.32 / 0.00 & 0.42 / 0.01 & 0.31 / 0.03 & 0.07 / 0.07 \\
    \bottomrule
  \end{tabular}
\end{table}

\begin{table}[h]
  \centering
  \captionsetup{singlelinecheck=off}
  \caption[]{Power and error rate (separated by/) of three
    methods estimating
    $\mathcal{R}_{\text{pos},\Gamma}$. Power is defined as the size of
    $\hat{\mathcal{R}}_{\text{pos},\Gamma}$ and
    error rate is defined as $1 -
    P(\hat{\mathcal{R}}_{\text{pos},\Gamma} \subseteq
    \mathcal{R}_{\text{pos}, \Gamma})$. Treatment effect in the
    10 cohorts is given by $\mu = (0.5, 0.5, 0.25, 0.25, 0.25, 0.25, 0.15, 0.15, 0.05, 0.05)$. The true $\mathcal{R}_{\text{pos},\Gamma}$ is listed below each $\Gamma$ value.
  }

  \label{tbl: simu res mu_1; 10 cohorts}
  \begin{tabular}{cccccccc}
    \toprule
    \multirow{2}{*}{Cohort size} & \multirow{2}{*}{Method} &$\Gamma = 1.0$ &$\Gamma = 1.8$ &$\Gamma = 2.2$
    &$\Gamma = 3.0$&$\Gamma = 3.5$ \\
                                 & & $\{r_1,\dotsc, r_{10}\}$ & $\{r_1,\dotsc, r_9\}$ &$\{r_1, \dotsc, r_6 \}$
    & $\{r_1, r_2\}$ & $\{r_1\}$ \\
    \midrule
    \multirow{5}{*}{250}
                                 & Bonferroni
                                                           & 10.00 / 0.00 & 6.80 / 0.01 & 2.41 / 0.00 & 0.20 / 0.00 & 0.02 / 0.01 \\
                                 &Value (50\%)
                                                           & 10.00 / 0.00 & 0.88 / 0.06 & 0.00 / 0.00 & 0.00 / 0.00 & 0.00 / 0.00 \\
                                 &Power (50\%)
                                                           & 10.00 / 0.00 & 6.30 / 0.05 & 2.34 / 0.03 & 0.44 / 0.02 & 0.11 / 0.05 \\
                                 &Value (25\%)
                                                           & 10.00 / 0.00 & 1.12 / 0.01 & 0.00 / 0.00 & 0.00 / 0.00 & 0.00 / 0.00 \\
                                 &Power (25\%)
                                                           & 10.00 / 0.00 & 7.06 / 0.06 & 2.73 / 0.02 & 0.42 / 0.02 & 0.10 / 0.05 \\
    \midrule
    \multirow{5}{*}{100}
                                 & Bonferroni
                                                           & 9.99 / 0.00 & 3.97 / 0.01 & 1.14 / 0.00 & 0.12 / 0.00 & 0.03 / 0.02 \\
                                 & Value (50\%)
                                                           & 9.95 / 0.00 & 0.76 / 0.05 & 0.03 / 0.01 & 0.00 / 0.00 & 0.00 / 0.00 \\
                                 & Power (50\%)
                                                           & 9.91 / 0.00 & 3.18 / 0.04 & 1.17 / 0.02 & 0.28 / 0.03 & 0.10 / 0.05 \\
                                 & Value (25\%)
                                                           & 9.95 / 0.00 & 1.06 / 0.04 & 0.06 / 0.01 & 0.00 / 0.00 & 0.00 / 0.00 \\
                                 & Power (25\%)
                                                           & 9.99 / 0.00 & 3.93 / 0.04 & 1.39 / 0.02 & 0.25 / 0.02 & 0.09 / 0.05 \\
    \bottomrule
  \end{tabular}
\end{table}

\begin{table}[h]
  \centering
  \captionsetup{singlelinecheck=off}
  \caption[]{Power and error rate (separated by/) of three
    methods estimating
    $\mathcal{R}_{\text{pos},\Gamma}$. Power is defined as the size of
    $\hat{\mathcal{R}}_{\text{pos},\Gamma}$ and
    error rate is defined as $1 -
    P(\hat{\mathcal{R}}_{\text{pos},\Gamma} \subseteq
    \mathcal{R}_{\text{pos}, \Gamma})$. Treatment effect in the
    10 cohorts is given by $\mu = (0.5, 0.3, 0.2, 0.0, -1.0, -1.0, 0.5, 0.5, 1.0, 1.0)$. The true $\mathcal{R}_{\text{pos},\Gamma}$ is listed below each $\Gamma$ value.
  }

  \label{tbl: simu res mu_2; 10 cohorts}
  \begin{tabular}{cccccccc}
    \toprule
    \multirow{2}{*}{Cohort size} & \multirow{2}{*}{Method} &$\Gamma = 1.0$ &$\Gamma = 1.5$ &$\Gamma = 2.0$
    &$\Gamma = 2.5$&$\Gamma = 3.0$ \\
                                 & & $\{r_1,\dotsc, r_4, r_9, r_{10}\}$ & $\{r_1,\dotsc, r_4\}$ &$\{r_1, r_2, r_3 \}$
    & $\{r_1, r_2\}$ & $\{r_1\}$ \\
    \midrule
    \multirow{5}{*}{250}
                                 & Bonferroni
                                                           & 5.98 / 0.00 & 3.51/0.00 & 1.55 / 0.00 & 0.37 / 0.00 & 0.07 / 0.00 \\
                                 &Value (50\%)
                                                           & 5.97 / 0.02 & 0.10 / 0.02 & 0.00 / 0.00 & 0.00 / 0.00 & 0.00 / 0.00 \\
                                 &Power (50\%)
                                                           & 6.00 / 0.02 & 3.43 / 0.02  & 1.53 / 0.01 & 0.56 / 0.02 & 0.21 / 0.01 \\
                                 &Value (25\%)
                                                           & 6.02 / 0.03 & 0.23 / 0.04 & 0.03 / 0.00 & 0.01 / 0.00 & 0.00 / 0.00 \\
                                 &Power (25\%)
                                                           & 6.02 / 0.02 & 3.67 / 0.04 & 1.84 / 0.01 & 0.66 / 0.01 & 0.22 / 0.01\\
    \midrule
    \multirow{5}{*}{100}
                                 & Bonferroni
                                                           & 5.60 / 0.01 & 2.42 / 0.00 & 0.68 / 0.00 & 0.19 / 0.00 & 0.06 / 0.00\\
                                 & Value (50\%)
                                                           & 5.24 / 0.03 & 0.22 / 0.03 & 0.04 / 0.00 & 0.01 / 0.00 & 0.00 / 0.00\\
                                 & Power (50\%)
                                                           & 5.48 / 0.03 & 2.23 / 0.03 & 0.86 / 0.02 & 0.31 / 0.02 & 0.16 / 0.02 \\
                                 & Value (25\%)
                                                           & 5.58 / 0.02 & 0.71 / 0.03 & 0.16 / 0.00 & 0.04 / 0.00 & 0.01 / 0.00 \\
                                 & Power (25\%)
                                                           & 5.71 / 0.02 & 2.61 / 0.03 & 0.98 / 0.01 & 0.36 / 0.02 & 0.14 / 0.02 \\
    \bottomrule
  \end{tabular}
\end{table}

\section{Application: The effect of late retirement on senior health}
\label{sec:application}

Finally we apply the proposed method to study the potentially
heterogeneous effect of retirement timing on senior health.
Many empirical studies have focused on the effect
of retirement timing on short-term and long-term health status of the
elderly people
\citep{morrow2001productive,alavinia2008unemployment,borsch2006early}. One
theory known as the ``psychosocial-materialist'' approach suggests
that retiring late may have health benefits because work forms a key part of
the identity of the elderly and provides financial, social and
psychological resources \citep{Calvo2012}. However, the health
benefits of late retirement may differ in different
subpopulations \citep{Dave_2008_late_retirement,
  Westerlund_late_retirement_2009}.

We obtained the observational data from the Health and Retirement
Study, an ongoing nationally representative survey of more than 30,000
adults who are older than 50 and their spouses in the United States. HRS
is sponsored by the National Institute of Aging; Detailed information
on the HRS and its design can be found in
\citet{sonnega_IJE_HRS}. We use the RAND HRS Longitudinal File 2014
(V2), an easy-to-use dataset based on the HRS core data
that consists of a follow-up study of $15,843$ elderly people
\citep{RAND_HRS_data}.

We defined the treatment as late retirement (retirement after
$65$ years old and before $70$ years old) and asked how it impacted
self-reported health status
at the age of $70$ (coded by: 5 - extremely good, 4 - very good,
3 - good, 2 - fair, and 1 - poor). We included individuals who retired
before $70$ and had complete measurements of the following confounders: year of
birth, gender, education (years), race (white or not), occupation (1:
executives and managers, 2: professional specialty, 3: sales and
administration, 4: protection services and armed forces, 5: cleaning,
building, food prep, and personal services, 6: production,
construction, and operation), partnered, annual income, and smoking
status. This left us with $1934$
  treated subjects and $4831$ controls. Figure \ref{fig: age_dist}
  plots the distribution of retirement age in all samples and in the
  treatment group. The distribution of retirement age in the treatment
  group is right skewed, with a spike of people retiring shortly after
  $65$ years old. In the Supplementary Materials,
  we give a detailed account of data
  preprocessing and sample inclusion criteria.

\begin{figure}[h]%
  \centering
  \subfigure[In all samples]{%
    \label{fig:first}%
    \includegraphics[height=2.5in]{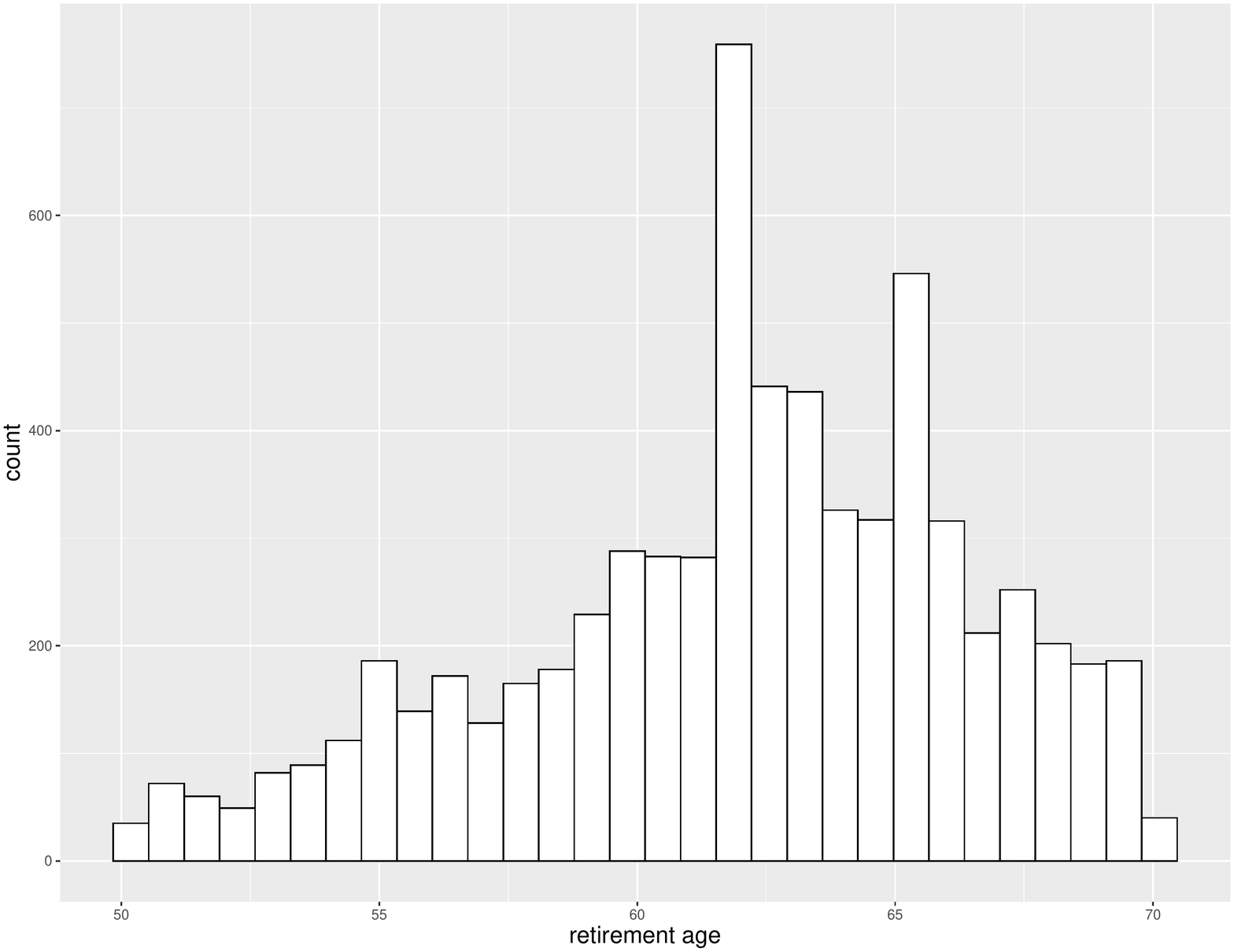}}%
  \subfigure[In the treated group]{%
    \label{fig:second}%
    \includegraphics[height=2.5in]{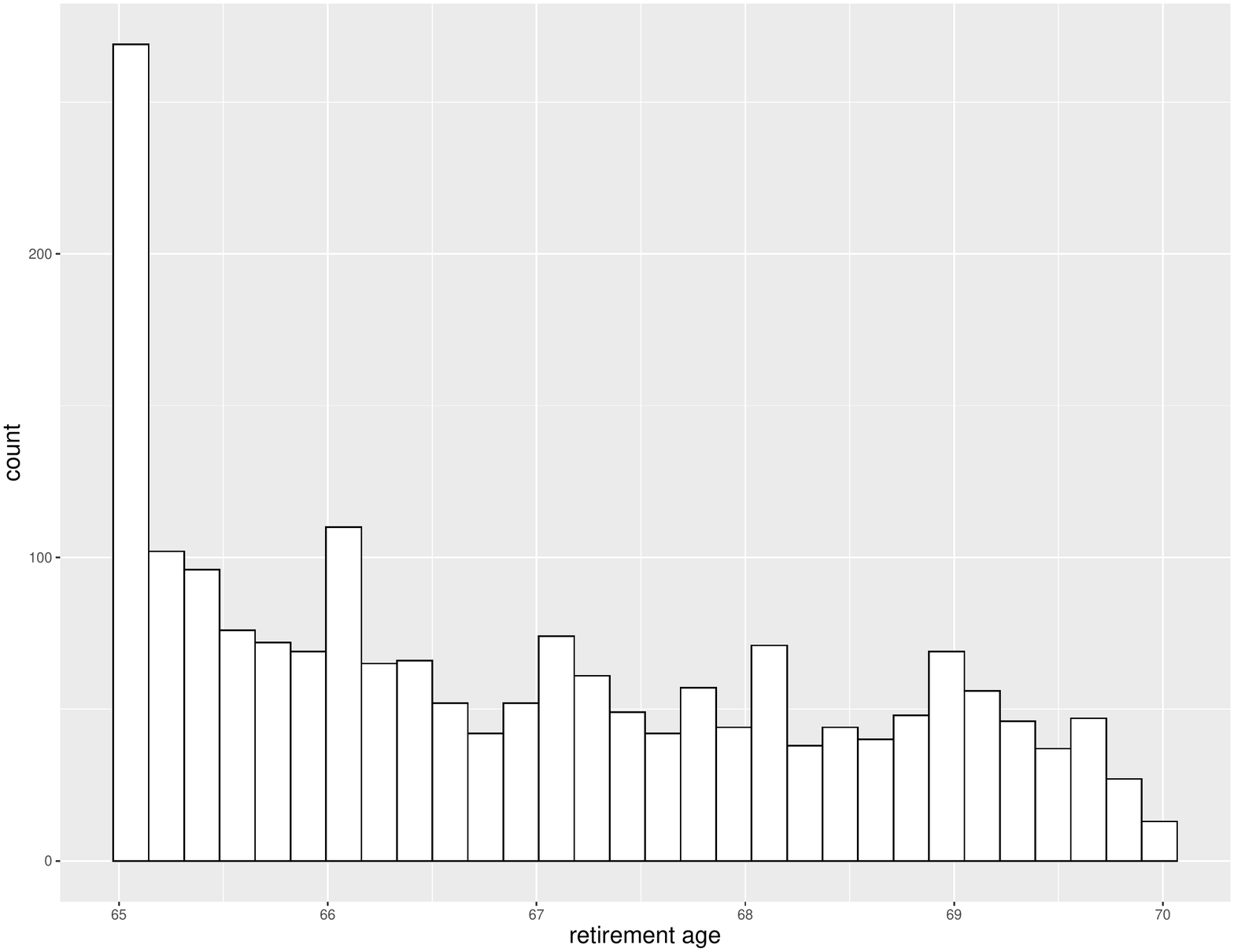}}%
  \caption{Distribution of retirement age}
  \label{fig: age_dist}
\end{figure}

Using optimal matching as implemented in the \texttt{optmatch} R
  package \citep{optmatch}, we form $1858$ matched pairs, matching
  exactly on the year of
  birth, gender, occupation, and partnered or not, and balance the race,
  years of education, and smoking status. Table \ref{tbl: balance table
    retirement} summarizes the covariate balance after
  matching. After matching, the treated and control
  groups are well-balanced (\Cref{tbl: balance table retirement}): the
  standardized differences of all covariates are less than
  0.1. Additionally, the propensity score in the treated and control group
  have good overlap before and after matching (see the Supplementary
  Materials).

\begin{table}[t]
  \centering
  \caption{Covariate balance after matching.}
  \label{tbl: balance table retirement}
  \begin{tabular}{lrrrr}
    \hline
    \hline
    & Control & Treated & std.diff &  \\
    \hline
    \hline
    Year of birth & 1936.27 & 1936.27 & 0.00 &     \\
    Female & 0.53 & 0.53 & 0.00 &     \\
    Non-hispanic white & 0.77 & 0.75 & -0.04 &     \\
    Education (yrs) & 12.52 & 12.53 & 0.00 &     \\
    Occupation: cleaning, building, food prep, and personal services & 0.10 & 0.10 & 0.00 &     \\
    Occupation: executives and managers & 0.16 & 0.16 & 0.00 &     \\
    Occupation: production construction and operation occupations & 0.28 & 0.28 & 0.00 &     \\
    Occupation: professional specialty & 0.19 & 0.19 & 0.00 &     \\
    Occupation: protection services and armed forces & 0.02 & 0.02 & 0.00 &     \\
    Occupation: sales and admin & 0.25 & 0.25 & 0.00 &     \\
    Partnered & 0.74 & 0.74 & 0.00 &     \\
    Smoke ever  & 0.63 & 0.59 & -0.08 &    \\
    \hline
  \end{tabular}
\end{table}

We considered two potential effect modifiers, namely gender and
occupation. More complicated treatment rules can in principle be
considered within our framework, though having more treatment rules
generally reduces the power of multiple testing. We grouped the $6$
occupations into $2$ broad categories: white collar jobs (executives and managers and professional
specialties) and blue collar jobs (sales, administration, protection
services, personal services, production, construction, and
operation). There were $4$ subgroups defined by these two potential
effect modifiers: male, white-collar workers ($G_1$), female,
white-collar workers ($G_2$), male, blue-collar workers ($G_3$), and
female, blue-collar workers ($G_4$). Thus, there were a total of $2^4 =
16$ different regimes formed out of these two effect modifiers. We
gave decimal as well as binary codings to the $16$ groups: $r_0$
($r_{0000}$) assigns control to everyone, $r_8\,(r_{1000}), r_4\,
(r_{0100}), r_2\,(r_{0010}), r_1\,(r_{0001})$
assign treatment to one of the $4$ subgroups, and so forth. % One can also add to this collection $\mathcal{D}$
% other treatment regimes, possibly learned using some statistical
% learning algorithms, for instance the minimum impurity decision
% assignments algorithm (\citealp{laber2015tree}).
We split the matched samples and used $1/4$ of them to plan the
test in the other $3/4$. Then we followed the procedures proposed in
\Cref{sec:multiple} to rank and select the treatment rules.

  \Cref{fig: hrs hasse 1.2} reports the estimated Hasse diagram at
  $\Gamma = 1.2$; additional results can be found in the Supplementary
  Materials. The estimated maximal rules for various choices of $\Gamma$ and $\delta$ are reported in
\Cref{tbl: HRS max set result} and the estimated positive rules are
reported in the Supplementary Materials. According to
  \Cref{tbl: HRS max set result}, the maximal rules under the no unmeasured confounding assumption are $r_{11} \, (r_{1011})$ which assigns late retirement
  to all but female, white-collar workers, $r_{13} \, (r_{1101})$ which
  assigns late retirement to all but male, blue-collar workers, and
  $r_{15} \, (r_{1111})$ which assigns treatment to everyone. When
  $\Gamma$ increases to $1.2$, $r_9 \, (r_{1001})$ which assigns
  treatment to male, white-collar workers and female, blue-collar
  workers, further enters the set of maximal rules. The estimated
positive rules suggest that $r_9\,(r_{1001})$ and $r_1 \, (r_{0001})$ which only assigns late retirement to
female blue collar workers, though not among the maximal rules at
$\Gamma = 1$ in \Cref{tbl: HRS max set result}, are the most robust to
unmeasured confounding. This suggests that later retirement perhaps
benefit female blue-collar workers more than others.

\begin{figure}
  \centering
  \includegraphics[scale = 0.5]{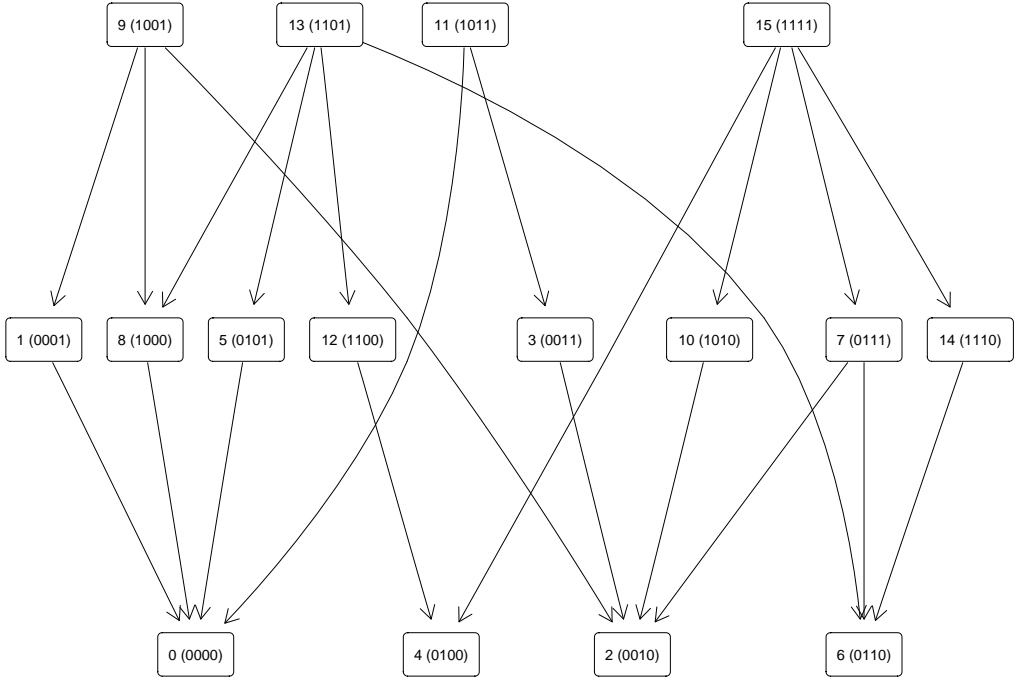}
  \caption{The effect of late retirement on health: Hasse diagram at $\Gamma = 1.2$}
  \label{fig: hrs hasse 1.2}
\end{figure}

\begin{table}[h]
  \centering
  \caption{The effect of late retirement on health: $\hat{\mathcal{R}}_{\text{max}, \Gamma}$ for different choices of $\Gamma$.}
  \label{tbl: HRS max set result}
  \begin{tabular}{cc}
    \toprule
    $\Gamma$ &$\hat{\mathcal{R}}_{\text{max}, \Gamma}$ \\
    \midrule
    1.0     & $\{r_{11}, r_{13}, r_{15}\}$        \\
    1.2     & $\{r_9, r_{11}, r_{13}, r_{15}\}$      \\
    % 1.3     &  $\{r_1, r_3, r_5, r_7, r_9, r_{11}, r_{13}, r_{15}\}$  \\
    1.35    &  $\{r_1, r_3, r_5, r_7, r_9, r_{11}, r_{13}, r_{15}\}$ \\
    \bottomrule
  \end{tabular}
\end{table}

Does $\Gamma = 1.2$ represent a weak or strong
  unmeasured confounder? \citet{Rosenbaum2009} proposed to
  \emph{amplify} $\Gamma$ to a two-dimensional curve indexed by
  $(\Lambda, \Delta)$, where $\Lambda$ describes the relationship
  between the unmeasured confounder and the treatment assignment, and
  $\Delta$ describes the relationship between the unmeasured confounder and the
  outcome. For instance, $\Gamma = 1.2$ corresponds to an unmeasured
  confounder associated with a doubling of the odds of late retirement
  and a $75\%$ increase in the odds of better health status at the age
  of $70$ in each matched pair, i.e., $(\Delta, \Lambda) = (2.0,
  1.75)$. \citet{Hsu2013} further proposed to calibrate $(\Lambda,
  \Delta)$ values to coefficients of observed
  covariates, however, their method only works for binary outcome and
  binary treatment. In the Supplementary Materials, we describe a
  calibration analysis that handle the ordinal self-reported health
  status level in our application that has $5$ levels.

  We follow \citet{Hsu2013} and use a plot to
  summarize the calibration analysis. In Figure \ref{fig: calibration
    plotb}, the
  blue curve represents \citet{Rosenbaum2009}'s two-dimensional amplification of
  $\Gamma = 1.2$ indexed by $(\Lambda, \Delta)$. The estimated
  coefficients of observed covariates are represented by black dots
  (after taking an exponential so they are comparable to $(\Lambda,
  \Delta)$). We followed the suggestion in \cite{Gelman2008} and
  standardized all the non-binary covariates to have mean $0$ and
  standard deviation $0.5$, so the coefficient of each binary variable
  can be interpreted directly and the coefficients of each
  continuous/ordinal variable can be interpreted as the effect of a
  2-SD increase in the covariate  value, which roughly corresponds to
  flipping a binary variable from $0$ to $1$. Note that all
  coefficients are under the $\Gamma = 1.2$ curve. In fact,
  $\Gamma = 1.2$ roughly corresponds to a moderately strong binary
  unobserved covariate whose effects on late retirement and health
  status are comparable to a binary covariate $U$ constructed from
  smoking and education (red star in \Cref{fig: calibration plotb}).
\begin{figure}
  \centering
  \includegraphics[scale = 0.5]{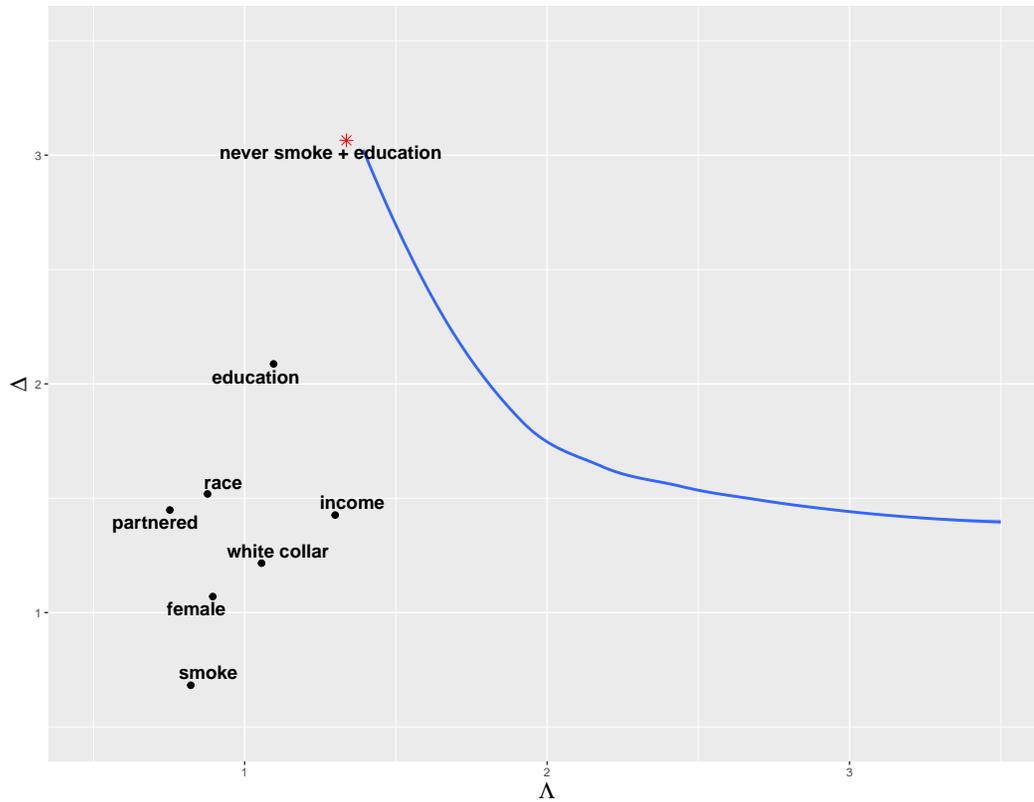}
  \caption{The effect of latent retirement on health: Calibration of the sensitivity
      analysis. The blue curve is \citet{Rosenbaum2009}'s
      amplification of $\Gamma = 1.2$. Black dots represent estimated
      coefficient of each observed covariate. Red marker represents
      the aggregate effect of flipping smoking status and increasing
      education by 2 standard deviations.}
  \label{fig: calibration plotb}
\end{figure}

\section{Discussion}
\label{sec:discussion}

In this paper we proposed a general framework to compare, select, and
rank treatment rules when there is a limited degree of unmeasured
confounding and illustrated the proposed methods by two real data
examples. A central message is that the best treatment rule (with the
largest estimated value) assuming no unmeasured confounding is often
not the most robust to unmeasured confounding. This may have important
policy implications when individualized treatment rules are learned
from observational data.

Because the value function only defines a partial order on the
treatment rules when there is unmeasured confounding, there is a
multitude of statistical questions one can ask about selecting
and ranking the treatment rules. We have considered three questions
that we believe are most relevant to policy research, but there are
many other questions (such as in \citet{gibbons1999selecting}) one
could ask.

In principle, our framework can be used with
  an arbitrary number of prespecified individualized treatment
  rules. However, to maintain a good statistical power in the
  multiple testing, the prespecified treatment rules should not be too
  many. This limitation makes our method most suitable as a
  confirmatory analysis to complement machine learning algorithms
  for individualized treatment rule discovery. Alternatively, if
  the number of decision variables is relatively low due to economic
  or practical reasons, our method is also reasonably powered
  for treatment rule discovery.

\section*{Acknowledgement}

JW received funding from the Population Research Training Grant (NIH
T32 HD007242) awarded to the Population Studies Center at the
University of Pennsylvania by the NIH's Eunice Kennedy Shriver
National Institute of Child Health and Human Development.

\appendix
\section{Appendix: Proof of \Cref{thm: asymp kappa}}

To simplify the notation, suppose $r_1(\bm X_i) < r_2(\bm X_i)$ for
all $1 \le i \le I$. Let \[
  D_{i, \Gamma} = D_i - \left(\frac{\Gamma - 1}{\Gamma + 1}\right)|D_{i, \Gamma}|, \quad \overline{D} = \frac{1}{I}\sum_{i = 1}^{I} D_i, \quad \overline{|D|} = \frac{1}{I}\sum_{i = 1}^{I} |D_i|,
\]
\[
  \overline{D}_\Gamma = (1/I)\sum_{i = 1}^{I} D_{i, \Gamma} = \overline{D} - \left(\frac{\Gamma - 1}{\Gamma + 1}\right) \overline{|D|},
\]
and \[
  se(\overline{D}_\Gamma)^2 = \frac{1}{I^2} \sum_{i = 1}^{I} (D_{i, \Gamma} - \overline{D}_\Gamma)^2.
\]

When $\mathbb{E}[D_i] > 0$, $\Gamma^\ast(r_1, r_2)$ is obtained by solving the equation below in $\Gamma$:
\begin{equation}
  \frac{\overline{D}_\Gamma}{se(\overline{D}_\Gamma)} = \Phi^{-1}(1 - \alpha).
  \label{eqn: sens value}
\end{equation}
Square both sides of the equation above and plug in the expressions for $\overline{D}_\Gamma$ and $se(\overline{D}_\Gamma)^2$. Let $\kappa = (\Gamma - 1)/(\Gamma + 1)$ and $z_\alpha = \Phi^{-1}(1 - \alpha)$. Denote \[
  z_\alpha = \Phi^{-1}(1-\alpha), \quad \overline{|D|} = \frac{1}{I}\sum_{i = 1}^{I } |D_i|,\quad ,
\] and
\[
  s^2_{D} = \frac{1}{I}\sum_{i = 1}^{I} (D_i - \overline{D})^2,\quad s^2_{|D|} = \frac{1}{I}\sum_{i = 1}^{I} (|D_i| - \overline{|D|})^2,\quad s_{D, |D|} = \frac{1}{I}\sum_{i = 1}^{I} (D_i - \overline{D})(|D_i| - \overline{|D|}).
\]

One can show the sensitivity value $\Gamma^\ast(r_1, r_2)$ corresponds to $\kappa^\ast$ that solves the following quadratic equation:
\[
  \left(\overline{|D|}^2 - \frac{1}{I} s^2_{|D|}z_\alpha^2 \right) \kappa^2 - 2\left(\overline{D}\overline{|D|} - \frac{1}{I} s_{D, |D|} z_\alpha^2\right)\kappa + \overline{D}^2 - \frac{1}{I} s^2_{D}c^2_\alpha = 0.
\]

Specifically, we have
\begin{equation}
  \kappa^\ast = \frac{{\overline{D}\overline{|D|} - \frac{1}{I} s_{D, |D|} z_\alpha^2} \pm \sqrt{\Delta} }{\overline{|D|}^2 - \frac{1}{I} s^2_{|D|}z_\alpha^2},
  \label{eqn: quadratic root}
\end{equation}
where $\Delta = (\overline{D}\overline{|D|} - \frac{1}{I} s_{D, |D|} z_\alpha^2)^2 - (\overline{|D|}^2 - \frac{1}{I} s^2_{|D|}z_\alpha^2)(\overline{D}^2 - \frac{1}{I} s^2_{D}c^2_\alpha)$.

Note
\[
  \sqrt{\Delta} = z_\alpha \sqrt{\frac{1}{I} \left(s^2_{|D|}\cdot \overline{D}^2 + s^2_{D}\cdot \overline{|D|}^2 -2 \overline{D}\overline{|D|} s_{D, |D|}\right) + \frac{1}{I^2}z_\alpha^2 \left(s_{D, |D|}^2 - s^2_{|D|}\cdot s^2_{D}\right)}.
\]
Let us denote $A = \mathbb{E}[D]\cdot\mathbb{E}[|D|]$, $B = -z_\alpha \sqrt{\sigma^2_{|D|}\cdot \mathbb{E}[D]^2 + \sigma^2_{D}\cdot \mathbb{E}[|D|]^2 -2 \mathbb{E}[D]\mathbb{E}[|D|] \sigma_{D, |D|}}$, $C = \mathbb{E}[|D|]^2$, $R_1 = \sqrt{I}(\overline{D}\overline{|D|} - A)$, $R_2 = \sqrt{I}(\overline{|D|}^2 - C)$.

We have \[
  \kappa^\ast = \frac{A + \frac{1}{\sqrt{I}}R_1 + \frac{1}{\sqrt{I}} B}{C + \frac{1}{\sqrt{I}} R_2} + o_p\left(\frac{1}{\sqrt{I}}\right) = \frac{(A + \frac{1}{\sqrt{I}}R_1 + \frac{1}{\sqrt{I}} B)\cdot(1 - \frac{1}{\sqrt{I}}\frac{R_2}{C})}{C} + o_p\left(\frac{1}{\sqrt{I}}\right).
\]

Scale both sides by $\sqrt{I}$ and rearrange the terms, we have\[
  \sqrt{I}\left(\kappa^\ast - \frac{A}{C}\right) = \frac{B}{C} + \frac{1}{C}R_1 - \frac{A}{C^2} R_2 + o_p(1).
\]

Moreover, let $\phi: \mathbb{R}^2 \mapsto \mathbb{R}^2 = (xy, y^2)$:
\[
  \sqrt{I}\begin{pmatrix}
    \overline{D} - \mathbb{E}[D] \\
    \overline{|D|} - \mathbb{E}[|D|]
  \end{pmatrix} \sim N(0, \Sigma),
  \quad \text{implies} \quad
  \begin{pmatrix}
    R_1 \\
    R_2
  \end{pmatrix} = \sqrt{I}\begin{pmatrix}
    \overline{D}\overline{|D|} - \mathbb{E}[D]\cdot\mathbb{E}[|D|] \\
    \overline{|D|}^2 - \mathbb{E}[|D|]^2
  \end{pmatrix}\sim N(0, \phi'\Sigma (\phi')^T),
\] where $\Sigma = \begin{pmatrix}
  \text{Var}[D], & \text{Cov}(D, |D|)\\
  \text{Cov}(D, |D|), & \text{Var}[|D|]
\end{pmatrix}$ and $\phi' = \begin{pmatrix}
  \mathbb{E}[|D|], &\mathbb{E}[D] \\
  0, &2\mathbb{E}[|D|]
\end{pmatrix}$.

Plug in the expressions for $A$, $B$, and $C$ and compute the variance-covariance matrix of $(1/C, -A/C^2)(R_1, R_2)^T$: \[
  \sqrt{I}\left(\kappa^\ast - \frac{\mathbb{E}[D]}{\mathbb{E}[|D|]}\right) \sim N(z_\alpha\mu, ~\sigma^2)
\]
where
\[
  \mu = -\frac{\sqrt{\sigma^2_{|D|}\cdot \mathbb{E}[D]^2 + \sigma^2_{D}\cdot \mathbb{E}[|D|]^2 -2 \mathbb{E}[D]\mathbb{E}[|D|] \sigma_{D, |D|}}}{\mathbb{E}[|D|]^2},
\]
and
\[
  \sigma^2 = \frac{\text{Var}[D] \mathbb{E}^2[|D|] - \text{Var}[|D|] \mathbb{E}^2[D] - 2\mathbb{E}[D]\mathbb{E}[|D|]\text{Cov}(D, |D|) + 2\mathbb{E}^2[D]\text{Var}[|D|]}{\mathbb{E}^4[|D|]}.
\]

The Supplementary Materials contain additional appendices about matching in
observational studies and further simulation and real data results.

\clearpage
\bibliographystyle{plainnat}
\bibliography{precision1}

\end{document}